\newtheorem{theorem}{Theorem}
\newtheorem{lemma}{Lemma}
\newtheorem{definition}{Definition}
\newcommand{\delq}[1]{}
\newcommand{\wiretap}{{\cal WT}}
\def\bX{{\bf X}}
\def\bY{{\bf Y}}
\def\bZ{{\bf Z}}
\def\bV{{\bf V}}
\def\bU{{\bf U}}
\def\bH{{\bf H}}
\def\bW{{\bf W}}
\def\bV{{\bf V}}
\def\bE{{\bf E}}
\def\bF{{\bf F}}
\def\bK{{\bf K}}
\def\bGa{{\mathcal{N}}}
\def\Capac{{\cal C}_c}
\def\bgamma{{\boldsymbol{\gamma}}}
\def\blambda{{\boldsymbol{\lambda}}}
\def\bLambda{{\boldsymbol{\Lambda}}}
\title{Oblivious Transfer over Wireless Channels} 
\author{
Jithin~Ravi\authorrefmark{1},
        Bikash~Kumar~Dey\authorrefmark{1},
        Emanuele~Viterbo\authorrefmark{2}
%\IEEEauthorblockA{Department of Electrical Engineering \\Indian Institute of Technology Bombay \\
%        {\tt \{rjithin,bikash\}@ee.iitb.ac.in}
%\IEEEauthorblockA{Department of Electrical \& Computer Systems Engineering\\Monash University, Australia\\
%      {\tt emanuele.viterbo@monash.edu}}
\thanks{This paper was presented in part at the Information
Theory Workshop, ITW 2015, Jerusalem. The work of Jithin~R and B.~K.~Dey was supported by Department of Science and Techonolgy under grant SB/S3/EECE/057/2013
and by Information Technology Research Academy under grant ITRA/15(64)/Mobile/USEAADWN/01. The work of E. Viterbo was supported by the Australian Research Council
through the Discovery Project under grant DP130100336. }

\thanks{\authorrefmark{1}Jithin~R. and B.~K.~Dey are with the Department of Electrical Engineering at
IIT Bombay, Mumbai, INDIA-400076. Email:\{rjithin,bikash\}@ee.iitb.ac.in}

\thanks{\authorrefmark{2}E.~Viterbo is with the Department of Electrical \& Computer Systems Engineering at Monash University, Australia. Email: emanuele.viterbo@monash.edu.}
}
\begin{document}
\maketitle
\begin{abstract}
We consider the problem of oblivious transfer (OT) over OFDM and MIMO 
wireless communication systems where only the receiver knows
the channel state information. The sender and receiver also have
unlimited access to a noise-free real channel.
Using a physical layer approach, based 
on the properties of the noisy fading channel, we propose a scheme 
that enables the transmitter to send obliviously one-of-two files, 
i.e., without knowing which one has been actually requested by the receiver,
while also ensuring that the receiver does not get any information about
the other file.
\end{abstract}
\section{Introduction}
Consider a movie server, or a server of medical database. A subscriber
wants a specific item (a movie, or information about a specific disease)
without the server being able to know which item is desired by the subscriber.
The subscriber is also not allowed to gain any significant information
about any other item. This is an example of oblivious transfer. 

In one-out-of-two string oblivious transfer (OT), 
one party, Alice, has two files and the other party, Bob, wants one of 
these files.
Bob needs to obtain the required file without Alice finding out the identity of the file chosen by him. Bob should also not be able to recover any significant
information about the other file. 
Alice and Bob are assumed to be ``honest but curious'' participants -
they follow the agreed protocol
but are also curious to gain additional knowledge of the other's
data from their own observations during the protocol~\cite{Ahlswede,Nascimento}. 

\delq{
BEC 1/2 OT capacity: Imai, Morozov, Nacimento 2006

Nacimento, Winter 2006/2008 : Capacity >0 for a class of channels/sources

BEC(\epsilon) capacity : Ahlswede & Csiszar, also upper bound on capacity

BSC protocol : Crepeau 1997

AWGN: Isaka 2010, Isaka 2009 (latice) 

Kilian 1988: OT is a primitive for 2-party circuit evaluation

Rabin 1981: single file OT  

Imai, Nacimento, Morozov  2006: BEC 1/2

 S. Wiesner, ”Conjugate coding,” 1970 - first time proposed OT, never published
}

OT has been studied in various forms for some time in 
cryptography~\cite{Rabin1981,Crepeau1997}. It is a special
case of secure function computation problems, where multiple parties
want to compute a function without revealing additional information
about their data to other parties.
It was shown by Kilian~\cite{Kilian1988} that an OT protocol can be
used as a subroutine to devise a protocol for two-party secure function 
computation for any function that is representable by a boolean circuit.

It is well known that OT can not be performed only by interactive 
communication over a noise-free channel.
The OT is thus studied with a noisy channel as a critical resource in
addition to unlimited access to a noise-free channel.
The OT capacity is the largest length of file that can be transferred, per use
of the noisy channel, between Alice and Bob.
In \cite{Ahlswede}, \cite{Nascimento}, 
one-out-of-two string OT has been studied when
the noisy channel between Alice and Bob is a Discrete Memoryless Channel (DMC).
An upper bound for the OT capacity of a DMC was given in \cite{Ahlswede}
and it was shown that the given upper bound is achievable by a simple
scheme for binary erasure channels (BEC). Multi-user variants of OT have 
been studied over broadcast erasure channels in 
\cite{MishraDPDisit14,MishraDPDitw-invited14}.

One-out-of-two string OT has been considered in the context of AWGN channels
in \cite{Isaka}, where a protocol was proposed. The case of fast fading wireless
channels has also been discussed in \cite{Isaka}, where the fading
state varies in each transmission and is not known to the transmitter
or the receiver. Under such assumption, the channel can be modeled
by the conditional probability distribution $p_{Y|X}$ with the channel state
marginalized.
The fading state does not directly provide any additional advantage in OT here,
other than through its influence on $p_{Y|X}$.
The OT capacity is not known for many important channels including AWGN
and binary symmetric channels.

In this paper, we consider OT over two
classes of wireless slow-fading channels: orthogonal frequency division multiplexing
(OFDM) channel and multiple input multiple output (MIMO) channel,
where the fading state information is available only
at the receiver (CSIR), \cite{Tse_book}. 
Channels with CSIR (Fig.~\ref{fig:otbasic}) have not been
considered for OT before to the best of our knowledge.
CSIR is a common assumption in wireless communication which can be made
when the coherence block length $n$ is sufficiently large. 
We allow an interactive protocol to run over $n$ uses of the channel
during which the channel state remains fixed,
and in that period the noise-free channel can be used any finite
number of times.
In other words, we assume that one run of the OT protocol is completed 
in one coherence block. 
However, following common principle of rate-adaptation used in many wireless
communication models, the OT rate may vary from block to block depending
on the channel state.
As we will see in our schemes, the knowledge
of the state only at the receiver is the key to some interesting techniques for OT. Our techniques have the flavor of the protocol for BECs~\cite{Ahlswede}.

%\vspace*{-4mm}
\begin{figure}[htbp]
\centering
\includegraphics[width=\columnwidth]{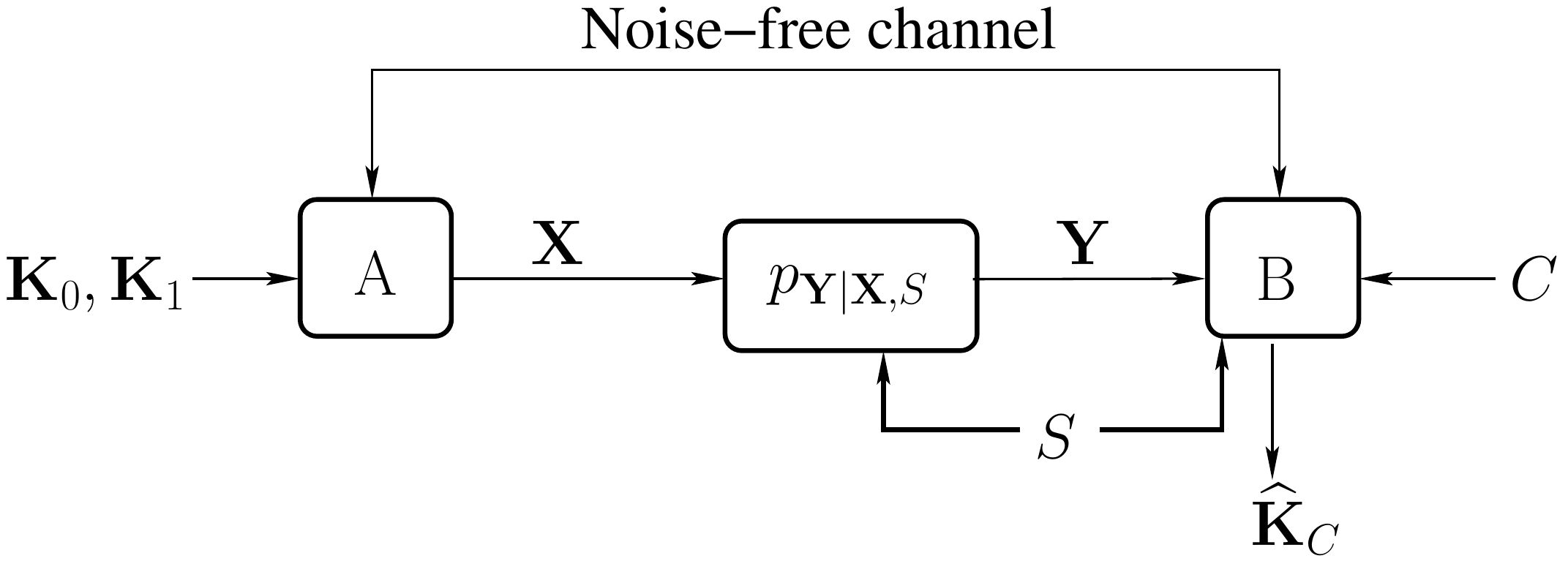}
\caption{Communication setup for oblivious transfer over channels with state}
\label{fig:otbasic}
\end{figure}

Communication under secrecy constraints has been studied
by many authors (see \cite{BlochB2011}). In particular, private communication over a wiretap channel
in the presence of eavesdropper has been studied 
extensively~\cite{Wyner1975,cheong,TanB2014,KhistiW2010,RezkiKA2014,bloch2008}.
In this work, we make use of coding techniques for
Gaussian wiretap channels as a building block for our achievability schemes.

In both OFDM and MIMO, we rely on the modeling of the channel
as parallel fading channels. For the MIMO setup, this is done
using the SVD precoder matrix that is communicated by Bob to Alice.
The parallel channels are grouped in pairs. 
OT is performed independently at different rates over different pairs.

%In both OFDM and MIMO, we rely on the modeling of the channel
%as parallel fading channels. For the MIMO setup, this is done
%by Bob first finding the SVD precoder matrix and sending it over
%the noise-free channel, to be used by Alice. The fading coefficients
%in the parallel channels are revealed by Bob to Alice
%in pairs. In each pair of channels, Bob does not reveal to Alice
%which fading gain is for which channel in that pair, but tells
%her which string is to be communicated over which channel.
%He ensures that the desired string is sent over the stronger of the
%channels. Alice uses encoding methods appropriate for a Gaussian wiretap
%channel for sending each string over the respective channel. The
%rate of transmission is so chosen as to guarantee that Bob
%can recover the desired string, but he can not get any information
%about the other string sent over the weaker channel.
%In fact, the stronger channel can be identified with the legitimate
%Alice-to-Bob channel and the weaker channel with the Alice-to-Eve
%channel where SNR is degraded.

We show (Theorem~\ref{thm1}) that the best pairing of the parallel channels
is that of the strongest channel with the weakest, and so on with the
rest of the channels.
The idea of pairing good and bad subchannels in OFDM and SVD-precoded
MIMO was also used in \cite{Mohammed2011july,Mohammed2011june}
with the aim of designing signal sets that minimize error probability or maximize mutual information.
Here, we exploit subchannel pairing to guarantee that Alice is oblivious
to which file is requested and that Bob only receives one of the two files.
We also derive the optimal power allocation among the pairs of channels.

The paper is organized as follows. Section~\ref{sec:model} presents the problem definition and 
the system model for both OFDM and MIMO channels. In Section~\ref{sec:scheme}, we present
protocols for OT over 2-channels OFDM, $2\times2$ MIMO and $2\times 1$ MIMO channels.
We present the general protocol for $2N$-channels OFDM and $2N \times n_B$ MIMO models
in Section~\ref{sec:protocol}, following a common principle.
Optimization of our protocol is discussed in Section~\ref{sec:optimize}.
High SNR asymptotics of OT rate for our protocol is analyzed in Section~\ref{sec:high}.
We provide simulation results of our OT scheme for simple OFDM and MIMO channels
in Section~\ref{sim:results}. Finally, we conclude the paper in Section~\ref{sec:conc}. The proof of our optimal pairing (Theorem~\ref{thm1}) is presented
in Appendix~\ref{sec:app1}.

\section{System Model}
\label{sec:model}
Alice (A) and Bob (B) are two parties in the system as shown 
in Fig.~\ref{fig:otbasic}. Alice has two 
binary strings $\bK_0, \bK_1$ of equal length, and Bob wants one of these 
strings $\bK_C$ where $C\in \{0,1\}$ is Bob's {\it choice bit}. 
We assume that all the bits in $(\bK_0,\bK_1,C)$ are
i.i.d. $\sim Ber(1/2)$. Alice can communicate with Bob over a channel
$p_{Y|X,S}$ with state $S$, where the state remains fixed over a large
block length $n$, and varies from block to block in an i.i.d. manner.
The state is known to Bob at the beginning of a block. This models wireless
communication setups, where in a large coherence block of length $n$,
the fading state remains fixed, and the fading state is known (estimated)
by the receiver. This is commonly known as the {\em quasi-static
channel model} \cite{Tse_book},\cite{BlochB2011}. 
In addition to this channel, there is also a noise-free channel
over which Alice and Bob can communicate real numbers
between each other without any error/distortion. 
During each block,
the noise-free channel can be used any finite number of times.
The length $L(S)$ of $\bK_0,\bK_1$ depends on $S$. Since Bob knows the state $S$
at the beginning of a block, he is assumed to compute and communicate
$L(S)$ to Alice over the noise-free channel.
The goal of a protocol is to transfer $\bK_C$ to Bob
obliviously, within the current block, such that Bob has negligible
knowledge about $\bK_{\overline{C}}$, and Alice has no knowledge
about $C$ ({\it perfect secrecy} against Alice). 

Our setup can also be used to transfer large files.
We then need multiple
coherence blocks to complete the OT session for one pair of files. The two
files can be broken into multiple chunks to form one pair $(\bK_{0i},\bK_{1i})$ 
for each block $i$. Then one run of the protocol is performed in each block,
where the choice bit $C$ of Bob remains the same over the whole session
involving many runs of the protocol.

% At the end of the protocol, Bob wants to learn $K_C$ with small probability of error. Bob should learn
% nothing about $K_{\wideline{C}}$. Alice should not learn anything about Bob's
% choice bit $C$.
An $(n,L(\cdot))$ OT protocol is parameterized by the number $n$ of channel
uses and by a function $L(\cdot)$ of the state $S$.
There are a total of $k$ rounds of communication between Alice and Bob, including communication
over both the noisy and noise-free channels. These are indexed by
$1,2,\cdots, k$, where $k$ can be random and can be dependent on $S$. But for every
$S$, it is required to be finite with probability $1$.
The noisy channel is used at rounds
$i_1,i_2,\cdots, i_n \in \{1,\cdots ,k\}$. At every round before round $i_1$, between consecutive $i_j$ and $i_{j+1}$, and after round $i_n$, Alice
and Bob exchange a sequence of real numbers over the noise-free channel.
In the following, $X_i$ and $Y_i$ denote respectively the input and the output 
of the noisy channel at time index $i$. 
In the following description of the protocol, we denote
$\bY^i:= (Y_1,Y_2,\cdots, Y_i)$ 
for any positive integer $i$. $\bE^i, \bF^i$ are also similarly defined.
In the rest of the paper, we also denote the transmitted length-$n$ vector
by $\bX$. The length-$n$ vector transmitted by the $l$-th antenna (in case
of MIMO) or over the $l$-th subchannel (in case of OFDM) will be denoted
by $\bX_l=(X_{l1}, X_{l2},\cdots, X_{ln})$.

\subsection{The structure of an $(n,L(\cdot))$ protocol:}

\begin{enumerate}
 \item Alice has two bit-strings $\bK_0,\bK_1$ of length $L(S)$ each, and Bob has a
choice bit $C$. $\bK_0,\bK_1$ can be substrings of two larger strings available
with Alice, and their length $L(S)$ is computed by Alice based on some 
information about $S$ sent by Bob during the protocol. 
\item Alice and Bob generate private random variables
 $W_A,W_B,$ respectively.
 \item For $i_j<i<i_{j+1}$ for every $j=0,1,\cdots,n$ (assuming $i_0=0$
and $i_{n+1}=k+1$),  
Alice sends
$E_i = E_i(\bK_0,\bK_1,W_A,\bF^{i-1})$ and Bob sends
$F_i = F_i(C,S,W_B,\bE^{i-1},\bY^{j})$
over the noise-free channel. Here $F^0=E^0=Y^0 = \emptyset.$
\item For $i=i_j$, Alice transmits $X_j = X_j(\bK_0,\bK_1, W_A, \bF^{i_j-1})$
over the noisy channel
and Bob receives $Y_j$.
There is no communication over
the noise-free channel in these rounds, and thus $E_i=F_i=\emptyset$.
\item At the end of the protocol,
Bob computes $\widehat{\bK}_C = \widehat{\bK}(C,S,W_B,\bE^k,\bY^n)$.
\end{enumerate}
\vspace{2mm}
The rate $L(S)/n$ of a protocol as described above is a function of the state
$S$, and is denoted by $R(S)$.
\begin{definition}
A non-negative rate function $R(S)$ is said to be achievable  if
there is a sequence of $(n,L^{(n)}(\cdot))$-protocols such that for every $S$, $\frac{L^{(n)}(S)}{n}
\rightarrow R(S)$ as $n\rightarrow \infty$, and the protocols
satisfy the conditions
\begin{eqnarray}
&P(\widehat{\bK}_C \neq \bK_C)\rightarrow 0 \notag\\
&I(\bK_0\bK_1W_A\bF^k;C) = 0 \notag\\
&\frac{1}{n} I(CSW_B\bY^n\bE^k; \bK_{\overline{C}}) \rightarrow 0. \label{eq:secrecyBob}
\end{eqnarray}
\end{definition}
The average rate $R$ is the expectation of $R(S)$.
The OT capacity is the supremum of all achievable average OT rates.

\subsection{Gaussian wiretap channel}

Wiretap channel has been studied as a standard model for communication
in the presence of an eavesdropper~\cite{Wyner1975,cheong}.
We model our MIMO and OFDM channels as complex channels.
If Alice and Bob are respectively
the transmitter and receiver of a complex AWGN channel, and if Eve is
a wiretapper, whose received symbol is more noisy than that of Bob (degraded
channel assumption),
then the {\em secrecy capacity} of the wiretapper channel is given by
\begin{align}
\Capac\left(\frac{P}{\sigma_B^2},\frac{P}{\sigma_E^2}\right)
= \log_2 \left(1+\frac{P}{\sigma_B^2}\right)
- \log_2 \left(1+\frac{P}{\sigma_E^2}\right)
\label{eq:gwiretap}
\end{align}
where $\sigma_B^2$ and $\sigma_E^2$ are the variance of the noise
at Bob and Eve, respectively, and $P$ is the transmit power~\cite{cheong}. Encoding for such channels
involves mixing the message with some random bits (with rate equaling
the capacity of the wiretapper) before encoding for the complex AWGN channels.
Bob can decode both the message and the random bits as the total rate
of these is below his capacity, whereas the random bits completely hide
the message from Eve.
Eve gets almost no information about the message~\cite{TanB2014}.
We will denote this channel with power constraint $P$ as
$\wiretap (P,\frac{P}{\sigma_B^2},\frac{P}{\sigma_E^2})$.
Practical coding schemes approaching the secrecy capacity have been proposed
for discrete memoryless channels using polar codes \cite{Mahdavifar2011}
and for the Gaussian channel based on lattice codes \cite{Ling2012},
under semantic security.

In this paper we consider two channels with states, OFDM and MIMO, as 
discussed below.
The essential technique used for OT over both these setups is the same.

% Here both Alice and Bob want to perform one-out-of-two string OT by using MIMO channel as the noisy channel.

\subsection{The OFDM Setup}
\label{sec:ofdm}
The OFDM setup is modeled in Fig.~\ref{fig:otmimo} as
$2N$ parallel fading AWGN channels between Alice and Bob.
The channel states are given by independent fading
coefficients $H_0,H_1,\cdots,H_{2N-1}$.
If the vector $\bX_l = (X_{l1},X_{l2},\cdots,X_{ln})$ is transmitted
in $n$ channel uses over the $l$-th channel for $l=0,1,\cdots,2N-1$,
then the received vector over the $l$-th channel is given by
\begin{align*}
& \bY_l = H_l\bX_l + \bZ_l,
\end{align*}
where $\bZ_l$ is the noise with  i.i.d. real and imaginary parts $\sim {\cal N}(0,1/2)$.
We assume that $H_l$ are i.i.d. with Rayleigh distribution.
The channel gains remain fixed for
a block of length $n$, and change from block to block in an i.i.d. manner.
We assume that they are known to Bob in the beginning of the block.
The average transmitted power in any block is restricted to $P$, i.e.,
$\sum_{l=0}^{2N-1} \sum_{j=1}^n |X_{lj}|^2 \leq nP$.

\vspace*{-2mm}
\begin{figure}[htbp]
\centering
\includegraphics[height=50mm,width=\columnwidth]{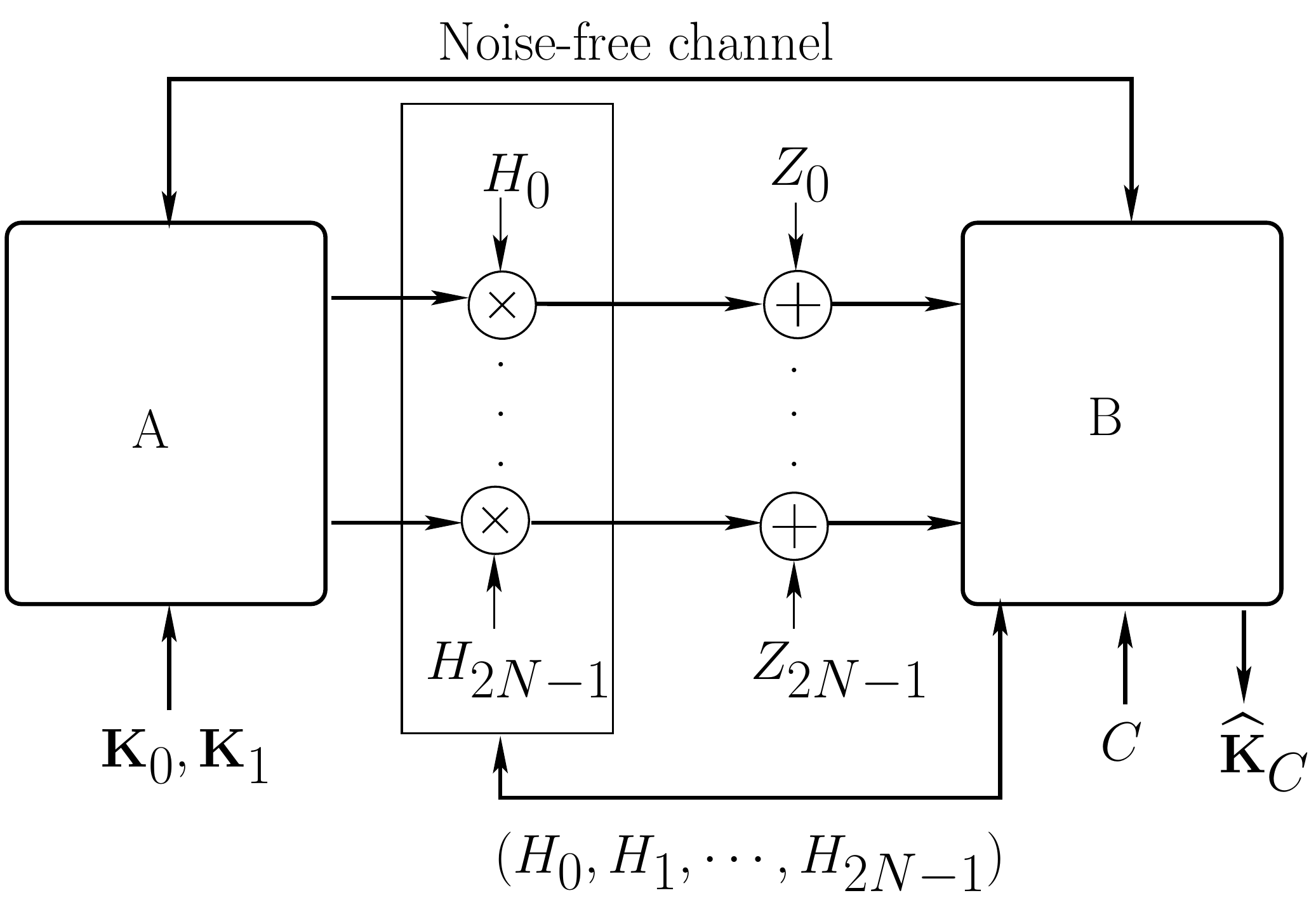}
\caption{The OT setup with independent parallel channels}
\label{fig:otmimo}
\end{figure}

\subsection{The MIMO Setup}
Let us consider the MIMO system with transmitter Alice and receiver Bob, as shown in Fig.~\ref{fig:mimo}.
The transmitter has $n_A$ antennas and
the receiver has $n_B$ antennas. We assume that $n_A$ is even.
Let $\bX = (X_{lj})_{\substack{ 0\leq l \leq n_A-1\\  1 \leq j \leq n}}$ denote the complex matrix transmitted by Alice over $n$ uses of the MIMO channel.
The received matrix $\bY$ is given by
\begin{align}
&\bY = \bH\bX +\bZ
\label{eq:mimoch}
\end{align}
where $\bZ \in \mathbb{C}^{n_B \times n} $ is the complex Gaussian noise matrix with
all entries having i.i.d. real and imaginary parts $\sim \bGa(0,1/2)$
and $\bH\in \mathbb{C}^{n_B\times n_A}$ represents the complex channel 
fading matrix. The entries of $\bH$ are assumed
to be i.i.d. complex random variables with independent real and imaginary 
parts $\sim \bGa(0,1/2)$. $\bH$ remains fixed
over the block of length $n$, and changes in an i.i.d. manner from block to
block. The average transmit power in any block is constrained to be $P$, i.e.,
$\sum_{l=0}^{n_A-1}\sum_{j=1}^n |X_{lj}|^2 \leq nP$.
We assume that $\bH$ is known only to Bob in the beginning of each block.

%\vspace*{-5mm}
\begin{figure}[htbp]
\hspace*{7mm}\includegraphics[width=1\columnwidth]{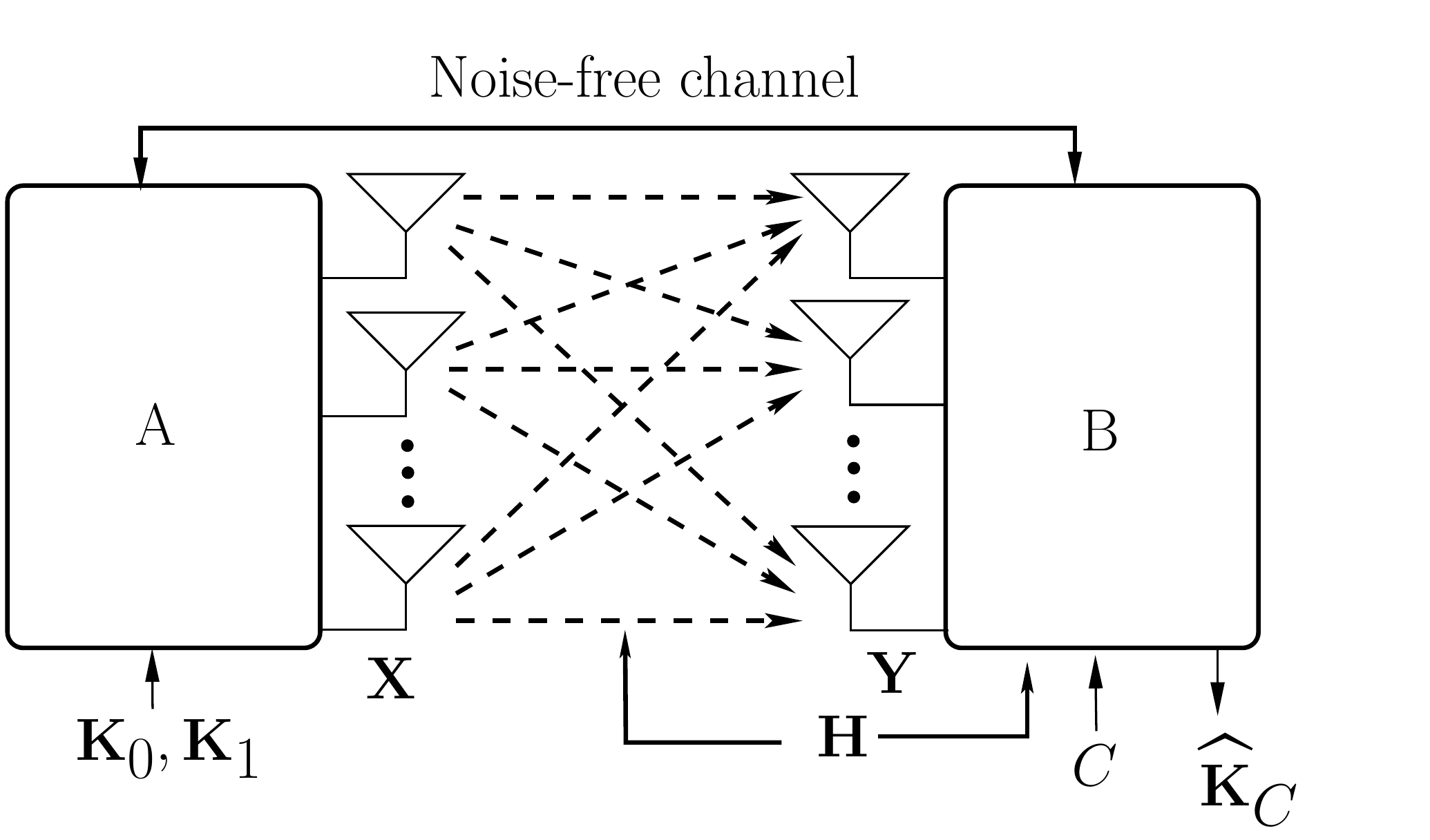}
\caption{MIMO system for oblivious transfer}
\label{fig:mimo}
\end{figure}

\section{The Protocol: Some Examples}
\label{sec:scheme}

We now show our OT protocols for some simple examples to illustrate
the basic principle. In all the three examples, Bob reveals some partial
information about the channel state to Alice so that there are, in effect, two
parallel channels with different SNRs, and Alice does not know which
of them is the better channel. Bob reveals the channel over which
each file is to be communicated -- the desired file over the stronger channel,
and the other file over the weaker channel. Alice uses encoding for a suitable 
wiretap channel so that Bob can decode the file transmitted over the
stronger channel, but not the file transmitted over the weaker channel.
\subsection{ $2$-Channels OFDM}
Let us consider an OFDM setup
with $2$ subchannels, each of which undergo independent and identical
Rayleigh fading.
For a block, let us define 
\begin{align*}
B & =\arg\max \{|H_0|,|H_1|\}\\
W & = C \oplus B \\
R & = \Capac(P|H_B|^2/2,P|H_{\overline B}|^2/2) - \epsilon
\end{align*}
where $\oplus$ denotes the modulo-$2$ addition, $\Capac(\cdot,\cdot)$
is given in \eqref{eq:gwiretap}, and $\epsilon>0$
is a pre-chosen constant. 
%In the following protocol,
%all channel encoding and decoding refer to the encoding and decoding for
%the Gaussian wiretap channel with transmit power constraint $P/2$,
%receiver SNR $P|H_B|^2/2$, and wiretapper SNR $PH_{\overline B}^2/2$.

\vspace*{2mm}
\noindent
{\it The protocol:}
\begin{enumerate}
 \item Bob reveals $(W,|H_B|,|H_{\overline{B}}|)$ to Alice over the noise-free channel.
 \item Alice takes strings $\bK_0$ and $\bK_1$ of length $L(|H_0|,|H_1|):=nR$ each. She encodes
      $\bK_W$ and $\bK_{\overline{W}}$ into two length-$n$ codewords $\bX_0$ 
      and $\bX_1$ respectively, such that each has an average power $P/2$.
      A code suitable for $\wiretap (\frac{P}{2},\frac{P|H_B|^2}{2},\frac{P|H_{\overline{B}}|^2}{2})$ 
      is used to encode both the strings.
      $\bX_0$ and $\bX_1$ are transmitted over the respective channels.
      Note that $\bK_C$ has been encoded into $\bX_B$, and $\bK_{\overline{C}}$
      has been encoded into $\bX_{\overline{B}}$.
 \item Bob receives $\bY_0$ and $\bY_1$ with SNR $P|H_0|^2/2$ and $P|H_1|^2/2$
      respectively. He decodes $\bK_C$ from $\bY_B$ using the decoder for
      the wiretap channel referred above.

\end{enumerate}

\vspace*{2mm}
\noindent
{\it Correctness of the protocol:}
Note that $\bK_C$ is transmitted over the stronger channel ($B$), and
$\bK_{\overline{C}}$ is transmitted over the weaker channel ($\overline{B}$).
Bob's received SNR in the stronger channel is $P|H_B|^2/2$, whereas
his received SNR in the weaker channel is $P|H_{\overline B}|^2/2$.
Thus he can decode $\bK_C$ with vanishing probability of error, whereas
he can get negligible information about $\bK_{\overline C}$ as his SNR
is that of the wiretapper in this channel. Since $|H_0|$ and $|H_1|$ are
independent and identically distributed, it is easy to check that
$I(W;C)=0$, thus Alice does not learn anything about Bob's
choice $C$.

\delq{
\vspace*{2mm}
{\bf Generalization to $2N$-channels OFDM:}                  
The protocol for $2$-channels OFDM can be generalized to
$2N$-channels OFDM in a simple way. Bob pairs the channels
into $N$ pairs of channels.
Bob reveals these pairs to Alice.
He also reveals the set of fading coefficients in each pair 
without telling which one
is for which channel. Alice finds two $N$-tuples of
channels by taking the first of each pair of channels in one
$N$-tuple, and by taking the second of each pair in the other $N$-tuple.
Bob asks for the desired ($\bK_C$) file to be transmitted over the better 
of the channels in the pairs, and the other file over the weaker of
the channels. Based on the fading states of each pair,
Alice will allocate different amount of 
power between different pairs while respecting her average power constraint. 
In each pair of channels, exactly the same protocol as presented for
the $2$-channels OFDM is followed to transfer a part of the desired file
obliviously. 
It is worth noting that we have two choices to make: 
(i) the pairing of the channels,
and (ii) the power allocation over different pairs of channels.
We omit these details here due to limited space.
}
\subsection{$2\times 2$ MIMO}
Consider a $2\times 2$ fading MIMO channel between Alice and Bob.
Alice and Bob each has $2$ antennas. Let ${\bf H}$ denote the
$2\times 2$ complex fading matrix. The input-output relation
for the channel is given by \eqref{eq:mimoch},
where $\bY, \bX,\bZ$ are $2\times n$ matrices.

%Though in our model, Alice does not know the channel state ${\bf H}$,
%we use the principle of SVD precoding.
Let the SVD decomposition of ${\bf H}$ be given by
\begin{align*}
&{\bf H} = {\bf U} \boldsymbol{\Lambda} {\bf V}^H, 
\end{align*}
where $\boldsymbol{\Lambda}$ is a diagonal matrix with diagonal elements ${\lambda}_0,
\lambda_1$ such that $\lambda_0\geq \lambda_1$. These are the (real) singular
values of $\bH$.
Let $\bV_0,\bV_1$ denote the columns of ${\bf V}$.
%Bob reveals $(W_1,W_2,\lambda_0,\lambda_1,R)$ to Alice over the noise-free channel.
We define 
\begin{align}
& (\bW_0,\bW_1) = (\bV_C,\bV_{\overline C})\notag\\
\mbox{and } & R = \Capac(P\lambda_0^2/2,P\lambda_1^2/2) - \epsilon
\label{eq:twobytwo2}
\end{align}
for some pre-decided $\epsilon$, where the $\Capac(\cdot,\cdot )$ above is defined
in \eqref{eq:gwiretap}. 
Note that $\bW_0,\bW_1$ are the same as $\bV_0,\bV_1$,
but permuted depending on $C$. Bob shares $(\bW_0,\bW_1)$ with Alice
in our protocol, and Alice uses it as the precoding matrix.
Bob first pre-multiplies the received matrix by $\bU^H$.
The resulting end-to-end system is shown in Fig.~\ref{fig:mimo2by2} where a switch,
controlled by Bob's choice bit $C$, determines which input of Alice
passes through which channel to Bob. The firm lines and dotted lines
show the two positions of the coupled switch.

%In the following, 
%all channel encoding and decoding refer to the encoding and decoding for
%a complex Gaussian wiretap channel with transmit power $P/2$, receiver SNR 
%$P\lambda_0^2/2$, and wiretapper SNR $P\lambda_1^2/2$.

\vspace*{2mm}
\noindent
{\it The protocol:}
\begin{enumerate}
 \item  Bob reveals $(\bW_0,\bW_1,\lambda_0,\lambda_1)$ to Alice over the 
       noise-free channel.
 \item The basic transmitter and receiver block diagram is shown in Fig.~\ref{fig:mimoot}.
      Alice computes $R$ using \eqref{eq:twobytwo2}, and takes strings $\bK_0$ and $\bK_1$ of length $L(\lambda_0,\lambda_1):=nR$ each. She encodes
      $\bK_0$ and $\bK_1$ into two length-$n$ codewords $\bX_0$
      and $\bX_1$ respectively, such that each has an average power $P/2$.
      A code suitable for $\wiretap (\frac{P}{2},\frac{P\lambda_0^2}{2},\frac{P\lambda_1^2}{2})$ 
      is used to encode both the strings.
      She then transmits the matrix
      \begin{align*}
      \left[\bW_0\,\,\, \bW_1\right]
      \left[\begin{array}{l} \bX_0\\ \bX_1\end{array}\right]
      & =\bW_0 \bX_0 + \bW_1\bX_1 \\
      & = \bV_0 \bX_C + \bV_1\bX_{\overline C} \\
      & =\bV\left[\begin{array}{l} \bX_C\\ \bX_{\overline C}\end{array}\right].
      \end{align*}
 \item Bob first multiplies the received $2\times n$ matrix by $\bU^H$.
      The resulting end-to-end channel is given by
      \begin{align}
      {\widetilde\bY} = \left[\begin{array}{l} \widetilde{\bY}_0 \\ \widetilde{\bY}_1   \end{array} \right] & = \bU^H \bH\bV\left[\begin{array}{l} \bX_C\\ \bX_{\overline C}\end{array}\right] +\bU^H\left[ \begin{array}{l} \bZ_0\\ \bZ_1  \end{array} \right] \notag\\
	& = \left[\begin{array}{l} \lambda_0 \bX_C\\ \lambda_1 \bX_{\overline C}\end{array}\right]
      + \bU^H\left[ \begin{array}{l} \bZ_0\\ \bZ_1  \end{array} \right]. \label{eq:rec2by2}
      \end{align}
      Bob gets $\widetilde{\bY}_0$ and $\widetilde{\bY}_1$ with SNR $P\lambda_0^2/2$ 
      and $P\lambda_1^2/2$ respectively. He decodes $\bK_C$ from $\bY_0$ using the 
      decoder for the wiretap channel referred above.
\end{enumerate}

%\vspace*{-5mm}
\begin{figure}[htbp]
\centering
\includegraphics[width=\columnwidth]{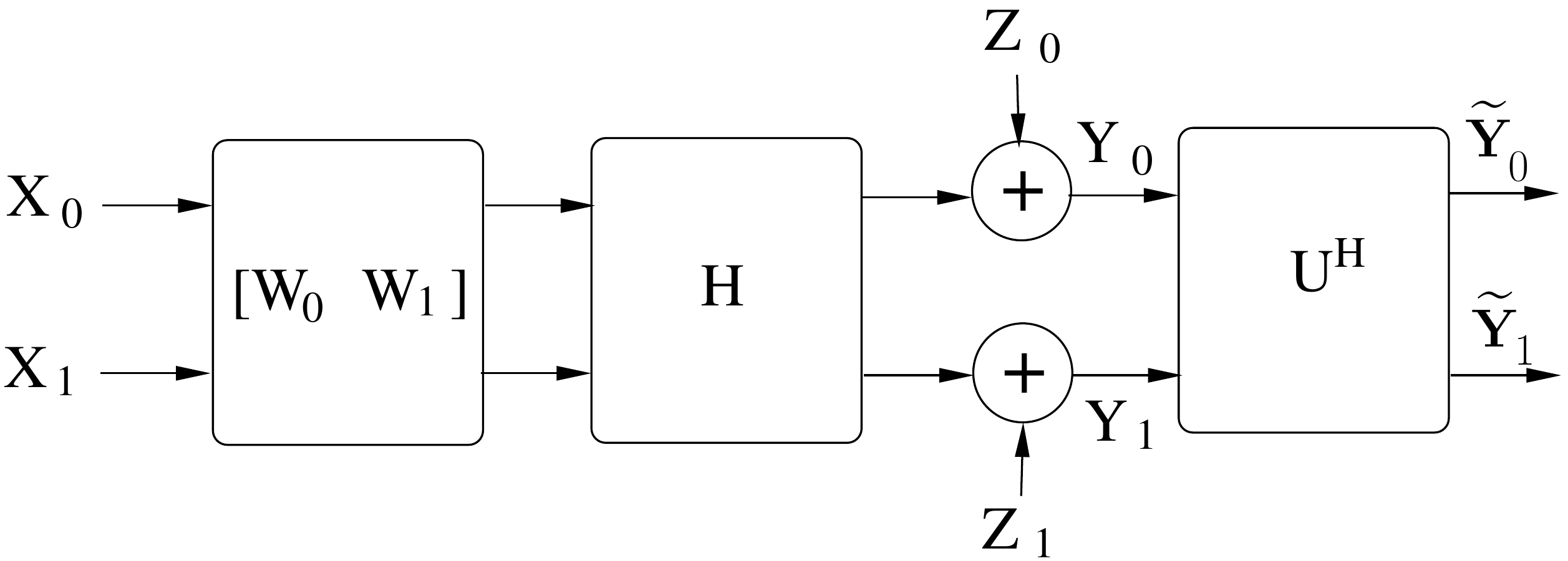}
\caption{MIMO precoding for OT}
\label{fig:mimoot}
\end{figure}

\begin{center}
\begin{figure}[h]
\includegraphics[width=0.9\columnwidth]{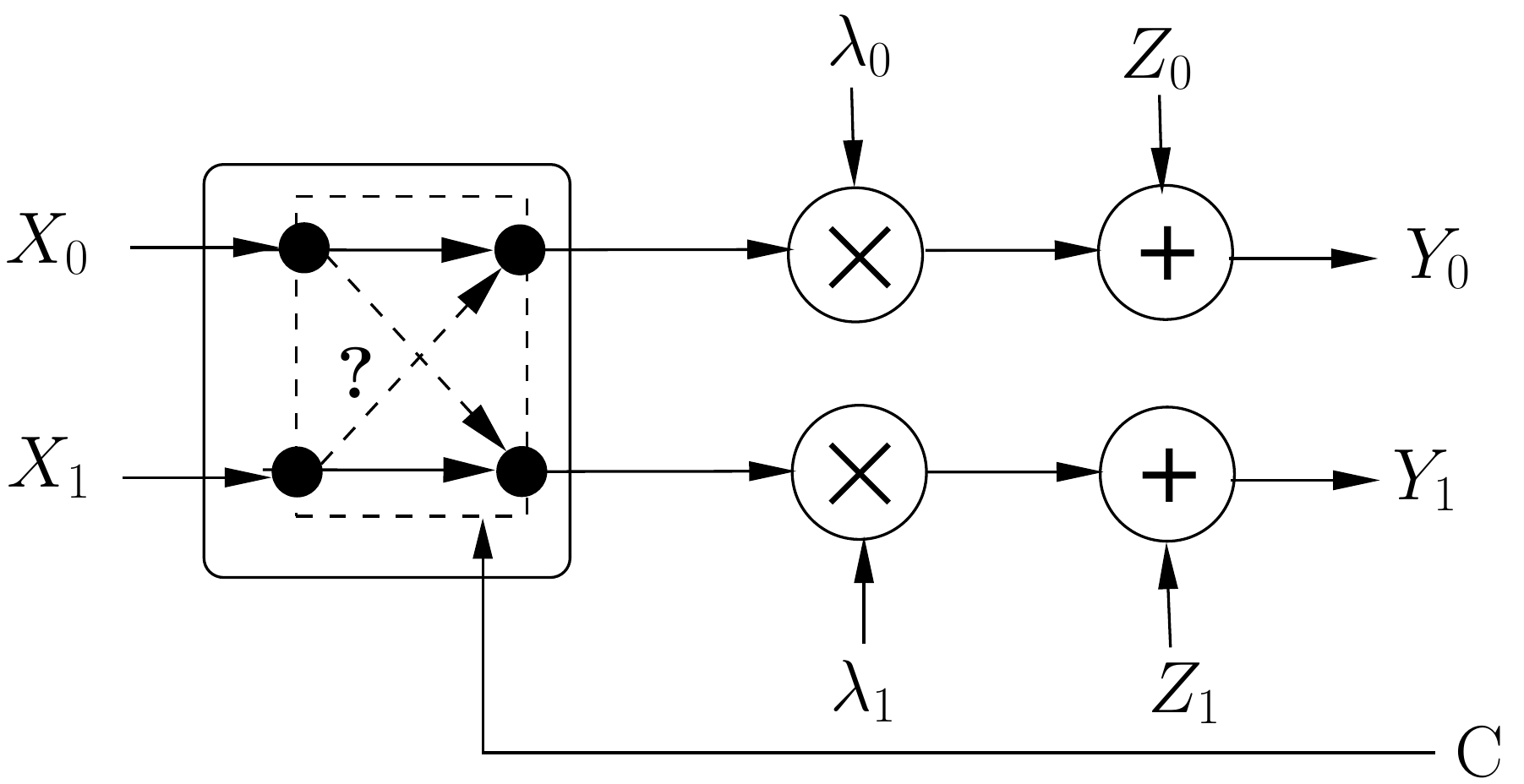}
\caption{The equivalent channel with a switch for $2\times 2$ MIMO setup}
\label{fig:mimo2by2}
\end{figure}
\end{center}

%\vspace*{-6mm}
\noindent
{\it Correctness of the protocol:}
First note that since $\widetilde{\bY}$ is obtained by a unitary 
(hence invertible) transformation on $\bY$, it contains exactly the same 
information as $\bY$. So we will henceforth treat $\widetilde{\bY}$ as Bob's 
received matrix.
Since $\bU$ is a unitary matrix, $\bU^H\bZ$ has the same distribution as that
of $\bZ$. Also note that $\bK_C$ is encoded into $\bX_C$, which is received
as ${\widetilde\bY}_0$ with SNR $P\lambda_0^2/2$. 
Since this encoding is done by Alice for a complex Gaussian wiretap channel 
with the same receiver
SNR, Bob can decode $\bK_C$ with vanishing probability of error.
On the other hand, $\bK_{\overline{C}}$ is encoded into $\bX_{\overline{C}}$, 
which is received as ${\widetilde\bY}_1$ with SNR $P\lambda_1^2/2$.
Bob can get negligible information
about $\bK_{\overline C}$ as his SNR in ${\widetilde\bY}_1$ is that of the wiretapper.
This ensures secrecy of Alice against Bob. 

About the secrecy of Bob against Alice, first note that $\bH$
is circularly symmetric, and thus $(\bV_0,\bV_1)$
and $(\bV_1,\bV_0)$ have the same distribution, that is, their joint distribution
is symmetric in $\bV_0$ and $\bV_1$. Also, note that $\lambda_0,\lambda_1$
are independent of $C,\bV_0,\bV_1$. Thus
\begin{align*}
I(\bW_0,\bW_1,\lambda_0,\lambda_1;C) 
& = I(\bV_C,\bV_{\overline{C}};C) = 0. 
\end{align*}
This ensures the secrecy of Bob against Alice.

As seen in \eqref{eq:rec2by2}, the SVD precoding as shown in Fig.~\ref{fig:mimoot} transforms the
MIMO channel into a parallel fading Gaussian channel, where Alice
is unsure of which of the two channels has the gain $\lambda_0$,
and which has gain $\lambda_1$. 
We now discuss the $2\times 1$ MIMO
system where the same technique takes a simple elegant form.

% \vspace*{2mm}
\subsection{ $2\times 1$ MIMO}
Consider a $2\times 1$ fading MIMO channel between Alice and Bob. 
Let ${\bf H}=(H_0,H_1)$ denote the
$1\times 2$ fading matrix such that the symbol received by Bob
over the MIMO channel is given by
\begin{align*}
&Y={\bf H}{\bf X}+Z,
\end{align*}
%$Y={\bf H}{\bf X}+Z,$
where ${\bf X}=(X_0,X_1)^T$ is the vector transmitted by Alice,
and $Z$ is the noise. Over $n$ uses of the channel,
the received vector is given by 
\begin{align*}
&\bY={\bf H}{\bf X}+\bZ,
\end{align*}
%$Y^n={\bf H}{\bf X}^n+Z^n,$
where ${\bf X}$ and $\bZ$ are respectively the $2 \times n$ transmitted matrix
and the noise vector of length $n$.
Let the SVD of 
${\bf H}$ be
\begin{align*}
&{\bf H} = \bLambda {\bf V}^H 
\end{align*}
where $\bLambda=(\lambda, 0)$, $\lambda = \sqrt{|H_0|^2+|H_1|^2}$,
the first column of ${\bf V}$ is ${\bf V}_0=(1/\lambda ){\bf H}^H$, and the second 
column of ${\bf V}$ is a unit vector ${\bf V}_1$ orthogonal to ${\bf H}$.

The best way to communicate messages (without any secrecy condition) is 
using SVD precoding 
wherein Alice multiplies her message symbol with the first column of
${\bf V}_0$ and transmits. Bob simply divides the received symbol by $\lambda$
and chooses the message symbol nearest to the result. Note that
if in addition, Alice added any scalar multiple of
${\bf V}_1$ to her transmission, it would not contribute
to the received symbol as ${\bf V}_1$ is orthogonal to
$\bH$. Thus this dimension which is orthonormal to ${\bf H}$ (the null-space
of ${\bf H}$) is not useful for communication, as it has zero
gain. This reduces the MIMO channel to a single fading AWGN
channel with fading coefficient $\lambda$.

We now give an OT protocol for this channel when only Bob has
the knowledge of ${\bf H}$ at the beginning of a block. 
We define
\begin{align}
& (\bW_0,\bW_1) = (\bV_C,\bV_{\overline C}) \label{eq:twobyone1}\\
 \mbox{and } & R = \log_2 \left(1+\frac{P\lambda^2}{2}\right) 
                  - \epsilon
\label{eq:twobyone2}
\end{align}
for some pre-decided $\epsilon$. Bob shares $(\bW_0,\bW_1)$ with Alice
in our protocol, and Alice uses it as the precoding matrix. 
The resulting channel is equivalent to what is shown in Fig.~\ref{fig:mimo2by1} where a switch,
controlled by Bob's choice bit $C$, determines which input of Alice
passes through the channel to Bob.

%In the following, all channel encoding and decoding refers to
%encoding and decoding schemes suitable for a complex AWGN channel
%with transmit power constraint $P/2$, channel gain $\lambda$, and
%noise variance $1$.

\vspace*{2mm}
\noindent
{\it The protocol}
\begin{enumerate}
 \item Bob reveals $(\bW_0,\bW_1,\lambda)$ to Alice over the noise-free channel.
      He sets $(\bW_0,\bW_1)$ as in \eqref{eq:twobyone1}.
 \item Both Alice and Bob compute $L(\lambda): = Rn$ with $R$ given in
       \eqref{eq:twobyone2}.
      Alice encodes each of $\bK_0$ and $\bK_1$
      (of length $L(\lambda)$ each) into a $n$-length vector.
      She uses a code suitable for a complex AWGN channel with SNR $\frac{P}{2}\lambda^2$.
      Let these encoded vectors be $\bX_0$ and $\bX_1$ respectively. 
      Over $n$ uses of the channel, Alice transmits the $2\times n$ matrix
      ${\bf W}_0\bX_0+{\bf W}_1\bX_1$.
 \item Bob receives
      \begin{align*}
      \bY & = \bH({\bf W}_0\bX_0+{\bf W}_1\bX_1) +\bZ\\
	& = \lambda \bX_C + \bZ. 
      \end{align*}
      Bob now decodes $\bK_C$ from $\bY$ with probability of error
      going to zero as $n \rightarrow \infty$.
\end{enumerate}

\begin{center}
\begin{figure}[h]
\includegraphics[width=0.9\columnwidth]{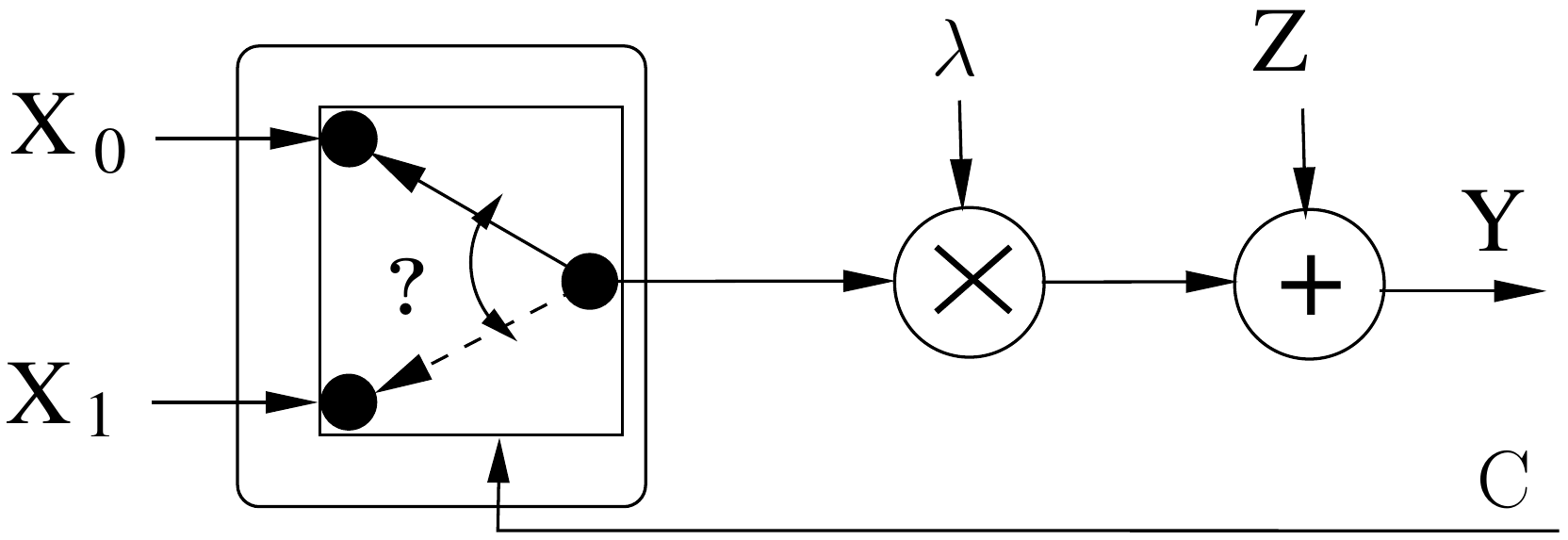}
\caption{The equivalent channel with a switch for $2\times 1$ MIMO setup}
\label{fig:mimo2by1}
\end{figure}
\end{center}

\vspace*{2mm}
\noindent
{\it Correctness of the protocol:}
Since $\bX_{\overline{C}}$ is transmitted
in the null-space of ${\bf H}$, it does not contribute to Bob's
received vector. Thus Bob has no information about $\bK_{\overline{C}}$.
Since $\bH$ has i.i.d. Gaussian entries, $(\bV_0,\bV_1)$
has a distribution which is symmetric in $\bV_0$ and $\bV_1$, and $\lambda$
is independent of $(\bV_0,\bV_1)$. Thus, 
$I(\bW_0,\bW_1,\lambda;C)=0$. Thus the secrecy of Bob against Alice
is met.

\delq{
\vspace*{2mm}
{\bf General MIMO channel:}
Our MIMO protocol can be extended to arbitrary even number
of transmit antennas. Bob will compute the SVD of $\bH$, and then
arrange the resulting parallel channels in pairs. The rest of
the protocol proceeds as discussed in the general OFDM case.
}

\section{The General Protocol}
\label{sec:protocol}
\def\bP{{\bf P}}
\def\bU{{\bf U}}
\def\bV{{\bf V}}
\def\bW{{\bf W}}

In this section, we present a protocol for the general $2N$-channels OFDM and
$2N\times n_B$-MIMO models. Here we assume that Alice has more ($2N$) antennas
than Bob has ($n_B$). The case $n_B >2N$ is similar, and is discussed
briefly later.

For the MIMO setup, we first discuss how Bob can reveal some
partial information about the channel matrix to reduce the channel to
a parallel channel. We will then treat both OFDM and MIMO models as
parallel channels and present a common OT protocol. The OT protocol
will group the parallel channels into pairs and perform OT over each
pair using similar technique as in the previous section.

\subsection{Reducing MIMO setup to parallel channels}
\label{subsec:mimo}
Let the SVD decomposition of ${\bf H}$ be given by
\begin{align*}
&{\bf H} = {\bf U} \boldsymbol{\Lambda} {\bf V}^H,
\end{align*}
where $\boldsymbol{\Lambda}$ is a $n_B\times 2N$ diagonal matrix with diagonal 
elements $\lambda_0 \geq \lambda_1 \geq \lambda_2 \geq \cdots \geq 
\lambda_{n_B-1}$. %Let $V_0,V_1,\cdots,V_{2N-1}$ denote the rows of ${\bf V}$.
Let $\bP$ be a random $2N\times 2N$ permutation matrix chosen by Bob. 
Note that a permutation matrix is unitary, and thus $\bP^T = \bP^{-1}$.
%So, we have
%\begin{align*}
%&{\bf H} = ({\bf U}\bP'^T) (\bP' \Lambda \bP^T) (\bP {\bf V}),
%\end{align*}
Let us add $(2N-n_B)$ zero rows with $\bU^H$ to define the $2N\times n_B$ matrix
\begin{align*}
&{\widetilde{\bf U}} = \left[\begin{array}{c} {\bf U}^H \\ {\bf 0} \end{array}\right].
\end{align*}
Bob sends $\bW=\bV\bP$ over the noise-free channel, and Alice uses it as
the precoding matrix to transmit $\bV\bP \bX$. Bob first multiplies the
received vector $\bY$ by $\bP^T{\widetilde{\bf U}}$ to get
\begin{align*}
{\widetilde \bY} & = \bP^T {\widetilde{\bf U}} \bY\\
& = \bP^T\left[\begin{array}{c} \boldsymbol{\Lambda} \bP \bX + \bU^H\bZ \\ {\bf 0}\end{array}\right]\\
& = \bP^T\left[\begin{array}{c} \boldsymbol{\Lambda} \\ {\bf 0}\end{array}\right] \bP \bX + \bP^T\left[\begin{array}{c} \bU^H\bZ \\ {\bf 0}\end{array}\right]\\
\end{align*}
Let us denote $\underline{\blambda}:= 
(\lambda_0,\lambda_1,\cdots, \lambda_{2N-1})^T$ 
as the $2N$ length vector of diagonal elements of
$\left[\begin{array}{c} \boldsymbol{\Lambda} \\ {\bf 0}\end{array}\right]$
where $\lambda_l = 0$ for $l\geq n_B$. Let us also denote 
${\widetilde{\bZ}}:=\left[\begin{array}{c} \bU^H\bZ \\ {\bf 0}\end{array}\right]$.
Let $\pi$
denote the permutation induced on a vector by pre-multiplication by $\bP^T$,
that is, $\bP^T\underline{\blambda} = (\lambda_{\pi(0)},\lambda_{\pi(1)},
\cdots,\lambda_{\pi(2N-1)})$ in particular.
Then
\begin{align*}
{\widetilde Y}_l
& = \lambda_{\pi(l)} X_l + {\widetilde Z}_{\pi(l)}.\\
\end{align*}
We note that for $\pi(l) \geq n_B$, $\lambda_{\pi(l)}={\widetilde Z}_{\pi(l)}=0$.
This gives a set of parallel channels such that $2N-n_B$ of them
have zero gain and zero noise. These channels are completely
useless for communication.
Since $\bU^H$ is unitary, $\bU^H\bZ$
is also i.i.d. with independent real and imaginary components 
$\sim {\cal N}(0,1/2)$. Since Bob knows $\bP$ (and so $\pi$), 
he will neglect the channels $l$ for which $\pi(l) \geq n_B$.
To reduce this model to a standard parallel AWGN channels model with
constant noise variance in all channels but different channel gains,
we assume that Bob adds some independent noise with real and imaginary
parts $\sim {\cal N}(0,1/2)$ to each of
the channels for which $\pi(l) \geq n_B$.

We now prove a lemma which states that in the resulting
parallel channels, Alice can not know the order of the channel
gains.
\begin{lemma}
\label{lem:perm_inv}
Let $\bH$ be the channel matrix and $\bP$ is a permutation matrix chosen 
uniformly at random. Let $\bW=\bV\bP$ denote the precoding
matrix sent to Alice by Bob, and $\underline{\blambda}$ be the
zero-padded vector of ordered singular values.
Then for any $\bW$ and $\underline{\blambda}$, and for any two permutations 
$\bP$ and $\bP'$, we have $Pr(\bP|\bW,\underline{\blambda}) 
= Pr(\bP'|\bW,\underline{\blambda}) = \frac{1}{(2N)!}$.
\end{lemma}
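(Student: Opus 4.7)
The plan is to apply Bayes' rule to reduce the lemma to showing that the conditional density $p(\bW, \underline{\blambda} \mid \bP)$ does not depend on $\bP$. Since $\bP$ is drawn uniformly over the $(2N)!$ permutation matrices, independently of $\bH$, we have $\Pr(\bP = \bP_0) = 1/(2N)!$, so
\begin{align*}
\Pr(\bP = \bP_0 \mid \bW, \underline{\blambda}) = \frac{p(\bW, \underline{\blambda} \mid \bP_0)\cdot 1/(2N)!}{\sum_{\bP'} p(\bW, \underline{\blambda} \mid \bP')\cdot 1/(2N)!},
\end{align*}
and it suffices to prove that the numerator is the same for every $\bP_0$.

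Next I would perform a measure-preserving change of variables from $\bV$ to $\bW = \bV\bP$. Because $\bP_0$ is unitary, this has unit Jacobian on $U(2N)$, yielding
\[
p(\bW, \underline{\blambda} \mid \bP = \bP_0) \;=\; p_{\bV,\underline{\blambda}}\bigl(\bW\bP_0^T,\, \underline{\blambda}\bigr).
\]
The task therefore reduces to showing that $p_{\bV, \underline{\blambda}}(\,\cdot\,, \underline{\blambda})$ is right-invariant under column permutations, i.e., that the conditional law of $\bV$ given $\underline{\blambda}$ is unchanged by $\bV \mapsto \bV\bP_0^T$.

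The key ingredient is the full rotational invariance of the i.i.d.\ complex Gaussian matrix $\bH$: for every deterministic unitary $\bQ$, $\bH\bQ \stackrel{d}{=} \bH$. Reading off the SVD $\bH\bQ = \bU\boldsymbol{\Lambda}(\bQ^H\bV)^H$, this transfers to the pair $(\bV, \underline{\blambda})$ as $(\bQ^H\bV, \underline{\blambda}) \stackrel{d}{=} (\bV, \underline{\blambda})$ for every unitary $\bQ$. Since a probability measure on the compact group $U(2N)$ that is left-invariant under the whole group is unique and equals Haar measure, the conditional distribution of $\bV$ given $\underline{\blambda}$ must be Haar. Haar measure is bi-invariant, so $\bV\bP_0^T$ has the same conditional law as $\bV$. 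Hence $p_{\bV, \underline{\blambda}}(\bW\bP_0^T, \underline{\blambda})$ does not depend on $\bP_0$, and the Bayes expression collapses to $1/(2N)!$.

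The main technical subtlety I foresee is the non-uniqueness of the SVD: each column of $\bV$ is determined only up to a unit-modulus phase, and when $n_B < 2N$ the trailing $2N - n_B$ columns form an arbitrary orthonormal basis of $\ker \bH$. To turn $\bV$ into a bona fide element of $U(2N)$ whose conditional law given $\underline{\blambda}$ is literally Haar, Bob should resolve these degrees of freedom by drawing them uniformly and independently of $\bH$. Under this convention the above chain of reasoning applies verbatim, and the auxiliary randomness can be commuted with the uniform permutation $\bP$ at the end without affecting the final distribution.
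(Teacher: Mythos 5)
Your proof is correct and follows essentially the same approach as the paper's: apply Bayes to reduce the claim to showing the conditional density of $(\bW,\underline{\blambda})$ given $\bP$ is $\bP$-independent, then invoke the fact that conditionally on $\underline{\blambda}$, $\bV$ is Haar-distributed on the unitary group, so that right-multiplication by a permutation preserves its law. The only real differences are in the details: the paper cites Telatar's Lemma~5 for the Haar-distributed-and-independent-of-singular-values fact, whereas you derive it from the right-unitary-invariance of the i.i.d.\ Gaussian ensemble together with uniqueness and bi-invariance of Haar measure on a compact group; you also surface the SVD non-uniqueness issue (per-column phase freedom, and the arbitrary orthonormal completion of $\ker\bH$ when $n_B<2N$), which the paper's argument implicitly absorbs into Telatar's choice of convention but does not spell out --- your suggestion to resolve these degrees of freedom uniformly at random, independently of $\bH$, is exactly the clean way to make the ``$\bV$ is Haar given $\underline{\blambda}$'' statement literal.
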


\begin{proof}
$\bV$ is uniformly distributed over the set of $2N\times 2N$ unitary matrices
(see~\cite[Lemma~5]{telatar1999}).
Since $\bP$ is a unitary matrix $\bW=\bV\bP$ is also unitary and 
both $\bV\bP$ and $\bV\bP'$ are Haar matrices
with the same uniform distribution over the set of $2N\times 2N$ unitary matrices.
Hence $f_{\bW, \underline{\blambda}|\bP}(\bW,\underline{\blambda}|\bP)=f_{\bV, \underline{\blambda}}(\bW\bP^{T},\underline{\blambda})=f_{\bV, \underline{\blambda}}(\bW,\underline{\blambda})$, and also $f_{\bW, \underline{\blambda}}(\bW,\underline{\blambda})
=f_{\bV, \underline{\blambda}}(\bW,\underline{\blambda})$. So we have
$Pr(\bP|\bW,\underline{\blambda}) = \frac{1}{(2N)!}$.
\end{proof}

We have now reduced the MIMO channel to a standard parallel AWGN channels
with different gains (singular values) in different subchannels.
The above lemma says that from the partial channel state information given to Alice, she still would be 
 `completely uncertain' about the association of the singular values to the resulting subchannels.
 
\noindent
\underline{\it The case of $n_B >2N$:} When $n_B >2N$, 
$\bU$ is an $n_B\times n_B$ matrix and 
$\boldsymbol{\Lambda}$ is a $n_B\times 2N$ diagonal matrix
with $(n_B-2N)$ zero rows. Let the last $n_B-2N$ rows of $\bU^H, \bLambda$ and 
$\bU^H\bZ$ be removed to obtain respectively $\widetilde{\bU}, \widetilde{\bLambda}$ and $\widetilde{\bZ}$. 
As before, Alice transmits $\bV\bP \bX$. 
Bob first multiplies $\bP^T\widetilde{\bU}$ to the
received vector to obtain
\begin{align*}
 \widetilde{\bY} &= \bP^T \widetilde{\bU}\bY \\
& = \bP^T\widetilde{\bLambda}\bP\bX + \bP^T\widetilde{\bZ}.
\end{align*}
The protocol now continues with the $2N$ components of ${\widetilde{\bY}}$ which
constitute the output of the $2N$ parallel channels as before.

In the following, we consider a set of parallel channels indexed
by $1,2,\cdots, 2N$, as depicted in Fig.~\ref{fig:otmimo}. 
Such a model could have resulted from an OFDM channel
or a MIMO channel under the scheme discussed above. 
To treat MIMO and OFDM in a unified manner in the following, we also assume
$\lambda_l = |H_l|$ to be the channel gains in case of OFDM as they
provide the same performance. For OFDM, we assume that $\lambda_1,\lambda_2,\cdots,\lambda_{2N}$ are i.i.d. and Rayleigh distributed. We now define
an OT-pairing of the channels and a power allocation under a 
given total power constraint.
\begin{definition}
An OT-pairing of the $2N$ channels is defined using two maps 
$\ell,k:\{1,2,\cdots, N\} \rightarrow \{1,2,\cdots,2N\}$ 
such that 
\begin{enumerate}
 \item $\ell,k$ are $1-1$
 \item $Im(\ell)\cap Im(k)= \emptyset $
 \item $\lambda_{\ell(l)} > \lambda_{k(l)} \: \forall \: l $.
\end{enumerate}
The ordered pairs of the channels are then $(\ell(l),k(l)); l=1,2,\cdots, N$.
\end{definition}

\subsection{ Power allocation}
Alice divides the total average transmit
power $P$ between the subchannels. In our OT protocol, Alice transmits
the same power over the subchannels in a pair. Let $P_l$ the average power 
transmitted on each of the subchannels in pair $l$, that is, in
the subchannels $\ell(l)$ and $k(l)$, be $P_l$. Then $P_l \geq 0$ and
\begin{align}
& \sum_{l=1}^{N} P_l \leq \frac{P}{2}.
\end{align}

The rates for the pairs are taken as
\begin{align}
 R_l & = \Capac(P_l\lambda_{\ell(l)}^2,P_l\lambda_{k(l)}^2) -\epsilon
\label{eq:genrates}
\end{align}
for an arbitrarily small fixed constant $\epsilon >0$. We denote
$\underline{{\bf R}} = (R_1,R_2,\cdots, R_N)$.
Note that $R_l$ is close to the capacity of the wiretap channel 
$\wiretap (P_l, P_l\lambda_{\ell(l)}^2,P_l\lambda_{k(l)}^2)$.
Our OT protocol for the $2$-channels OFDM can be used with average
power constraint $2P_l$ to achieve a rate $R_l$ for each pair of subchannels.
The total rate achieved is thus
\begin{align}
& R = \sum_{l=1}^N \Capac(P_l\lambda_{\ell(l)}^2,P_l\lambda_{k(l)}^2) -\epsilon N.
\end{align}
For simplicity, we assume that $nR_l$ is an integer for each $l$.

We define for $l=1,2,\cdots, N$,
\begin{align}
\tilde{\gamma}_l & = (\gamma_{l0}, \gamma_{l1}) = (\ell(l),k(l)) \\
\tilde{\lambda}_l & = (\lambda_{\ell(l)},\lambda_{k(l)}), \label{eq:genchannel}
\end{align}
and denote $\underline{\tilde{\bgamma}}:=(\tilde{\gamma}_1,\tilde{\gamma}_2,\cdots,\tilde{\gamma}_N)$ and $\underline{\tilde{\blambda}}
=(\tilde{\lambda}_1,\tilde{\lambda}_2,\cdots, \tilde{\lambda}_N)$.

Let ${\bf T}$ denote the $2N\times 2N$ permutation matrix representing the
transposition of consecutive pairs. ${\bf T}$ consists of $N$ diagonal $2\times 2$
blocks $\left[\begin{array}{cc} 0 & 1\\ 1 & 0\end{array}\right]$.
We define
\begin{align}
\underline{\bgamma} & = \begin{cases} \underline{\tilde{\bgamma}} & \mbox{ if } C=0\\
 \underline{\tilde{\bgamma}} T & \mbox{ if } C=1 \end{cases} \label{eq:genpairs}
\end{align}
Bob shares $(\underline{\bgamma},\underline{\tilde{\blambda}})$ with Alice.
From Alice's point of view, the parallel channels appear to be associated
with the gains shown in Fig.~\ref{fig:mimogen}. The association of the gains to the
channels has one bit of uncertainty as depicted by the two possible positions
of the coupled switches. The position of the switches is controlled by $C$,
and is not known to Alice. We give the protocol below.
\begin{center}
\begin{figure}[h]
\includegraphics[width=0.9\columnwidth]{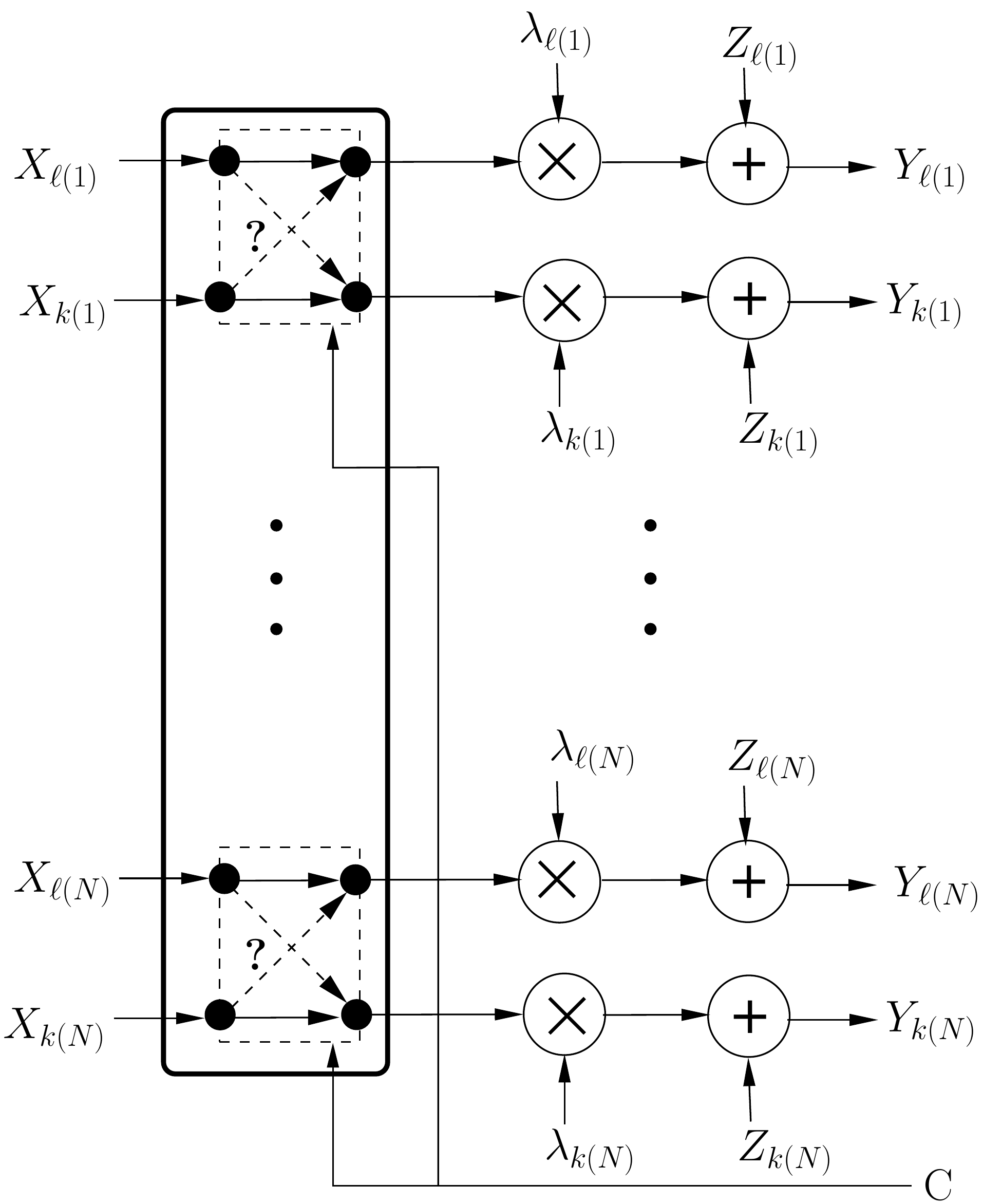}
\caption{The equivalent channel with a switch}
\label{fig:mimogen}
\end{figure}
\end{center}
\subsection{The protocol}
\begin{enumerate}
\item In case of a MIMO setup, Bob first reveals $\bW$ to Alice, and Alice
uses it as the precoding matrix. Bob also does appropriate pre-processing
as discussed in Sec.~\ref{subsec:mimo} to reduce the channel to a set of parallel channels.
 \item Bob selects an OT pairing $\ell,k$ and reveals 
	$(\underline{\bgamma},\underline{\tilde{\blambda}})$ to 
	Alice over the noise-free channel. He computes these using
	\eqref{eq:genpairs} and \eqref{eq:genchannel}
	respectively.
 \item Both Alice and Bob compute $R_l$ using \eqref{eq:genrates} and $L_l = R_ln$ 
      for $l=1,2,\cdots,N$. Let us denote $L = \sum_{l=1}^N L_l$.
      For each $j=0,1$, Alice breaks $\bK_j$ (of length $L$) into $N$ substrings
      $\bK_{jl};l=1,2,\cdots,N$ of lengths $L_l$ respectively. 
      For each $j=0,1,$ and $l=1,2,\cdots, N$, she encodes $\bK_{jl}$ into a 
      $n$-length vector $\bX_{jl}$ of average power
      $P_l$ using a code for the wiretap channel $\wiretap (P_l,P_l\lambda_{\ell(l)}^2,P_l\lambda_{k(l)}^2)$.
      Alice transmits this vector over $n$ uses of the channel $\gamma_{lj}$.
 \item Note that from \eqref{eq:genpairs}, $\gamma_{lC} = \ell(l)$ and $\gamma_{l\overline{C}}
      =k(l)$ for each $l=1,2,\cdots,N$. Thus Bob receives
      \begin{align*}
      \bY_{\ell(l)} & = \lambda_{\ell(l)}\bX_{Cl} +\bZ_{\ell(l)}.
      \end{align*}
      Bob now decodes $\bK_{Cl}$ from $\bY_{\ell(l)}$ with probability of error
      going to zero as $n \rightarrow \infty$.
      \end{enumerate}

\vspace*{2mm}
\noindent
{\it Correctness of the protocol:} 
Bob can decode $\bK_{Cl}$ from $\bY_{\ell(l)}$ for each $l$ with arbitrarily
small probability of error. This follows from standard results in
Gaussian wiretap channels~\cite{cheong}. It also follows that he gets only
an arbitrarily small amount of information about $\bK_{\overline{C}}$
from $\bY_{k(l)}$ in the sense of \eqref{eq:secrecyBob} \cite{TanB2014}. 

Alice knows that $\underline{\tilde{\bgamma}} \in \{\underline{\bgamma},\underline{\bgamma}T\}$.
Since $\underline{\bgamma}$ and $\underline{\tilde{\blambda}}$ are revealed to
Alice during the protocol, the uncertainty in $C$ is equivalent to
the uncertainty in which of $\underline{\bgamma},\underline{\bgamma}T$
is the value of $\underline{\tilde{\bgamma}}$. 

Now, let us first consider an OFDM channel. 
From the point of view of Alice,
\begin{align}
Pr(C=0|\underline{\bgamma}, \underline{\tilde{\blambda}})
& = Pr(\underline{\tilde{\bgamma}}=\underline{\bgamma}|\underline{\tilde{\bgamma}} \in \{\underline{\bgamma},\underline{\bgamma}T\}, \underline{\tilde{\blambda}})\notag\\
& = Pr(\underline{\tilde{\bgamma}}=\underline{\bgamma}T|\underline{\tilde{\bgamma}} \in \{\underline{\bgamma},\underline{\bgamma}T\}, \underline{\tilde{\blambda}})\label{eq:ofdm1}\\
& = Pr(C=1|\underline{\bgamma}, \underline{\tilde{\blambda}}).\notag
\end{align}
Here \eqref{eq:ofdm1} follows as we have assumed that the channel
gains of the parallel channels are i.i.d.
This implies that $I(C;\underline{\bgamma}, \underline{\tilde{\blambda}})=0$.

Similarly, if the parallel channels have resulted from a MIMO channel,
then Alice has also learned the precoding matrix $\bW$. Now,
\begin{align}
Pr(C=0|\bW, \underline{\bgamma}, \underline{\tilde{\blambda}})
\hspace*{-15mm} &\notag \\
& = Pr(\underline{\tilde{\bgamma}}=\underline{\bgamma}|\bW, \underline{\tilde{\bgamma}} \in \{\underline{\bgamma},\underline{\bgamma}T\}, \underline{\tilde{\blambda}})\notag\\
& = Pr(\underline{\tilde{\bgamma}}=\underline{\bgamma}T|\bW, \underline{\tilde{\bgamma}} \in \{\underline{\bgamma},\underline{\bgamma}T\}, \underline{\tilde{\blambda}})\label{eq:mimo1}\\
& = Pr(C=1|\bW, \underline{\bgamma}, \underline{\tilde{\blambda}}).\notag
\end{align}
Here \eqref{eq:mimo1} follows from Lemma~\ref{lem:perm_inv}. Thus we have
$I(C;\bW,\underline{\bgamma}, \underline{\tilde{\blambda}})=0$.
This proves that Alice does not gain any information about $C$ from
what she learns during the protocol.

We now discuss the optimal OT-pairing and the optimal power allocation.

\section{Optimization of the protocol}
\label{sec:optimize}
Let us first consider the simple setup where equal power is allocated in 
all pairs of subchannels, i.e.,
\begin{align*}
P_l = \frac{P}{2N} \phantom{xxx} \forall l.
\end{align*}
The capacity for this power allocation is
\begin{align*}
R = \sum_{l=1}^{N} \log \left(1+\frac{P\lambda_{\ell(l)}^{2}}{2N}\right)
       - \sum_{l=1}^{N}  \log \left(1+\frac{P\lambda_{k(l)}^{2}}{2N}\right)
\end{align*}
Clearly, this is maximized if $\lambda^2_{\ell(l)} > \lambda^2_{k(j)}$ for all $l,j$.
That is, provided the best half of the channels form the stronger
channels of the pairs, the achieved rate is independent of the
actual pairing. However, this is not true if we have the freedom to
pair the channels as well as to allocate variable power $P_l$ to different
pairs. In general, we would like to choose an {\em optimal pairing} $(\ell(l), k(l)); 
1\leq l \leq N$ and power allocation $P_l; 1\leq l \leq N$ so as
to maximize
\begin{equation}
R = \sum_{l=1}^{N}  \log \left(1+\frac{P_l\lambda_{\ell(l)}^{2}}{2N}\right)
    - \sum_{l=1}^{N}  \log \left(1+\frac{P_l\lambda_{k(l)}^{2}}{2N}\right).
\label{eq:opt1}
\end{equation}

The following theorem states that an optimal OT pairing couples the best
channel with the worst, and so on with the remaining channels.

\begin{theorem} \label{thm1}
An optimal pairing combines the best channel with the worst channel and continues
similarly with the remaining channels.  
That is, the pairing is given by $\ell(l)=\sigma(l)$ and $k(l)=\sigma(2N-l+1)$ 
for $l=1,\cdots,N$ for some permutation $\sigma$ which arranges
the gains in a non-increasing order.
\end{theorem}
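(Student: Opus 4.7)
The plan is an exchange argument on pairs of channels, combined with induction on $N$. Starting from an arbitrary pairing $\pi$ with any admissible power allocation achieving rate $R$, I will modify $\pi$ toward the extreme pairing of Theorem~\ref{thm1} without ever decreasing $R$; the theorem then follows on taking suprema over the power allocation for the extreme pairing.

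\emph{Step 1 (Four-channel exchange lemma).} Fix any four gains $w>x>y>z$. They can be partitioned into two ordered pairs (larger gain first) in three ways: the segregated pairing $\{(w,x),(y,z)\}$, the moderate pairing $\{(w,y),(x,z)\}$, and the nested pairing $\{(w,z),(x,y)\}$. I claim that for any powers $P_1,P_2\ge 0$ attached to the pairs of either the segregated or moderate configuration, one can find $P_1',P_2'\ge 0$ with $P_1'+P_2'=P_1+P_2$ such that the nested configuration with these powers achieves at least the same sum-rate. For the moderate-to-nested comparison with the same powers, the rate difference is
\[
\log\frac{(1+yP_1)(1+zP_2)}{(1+zP_1)(1+yP_2)},
\]
whose sign equals that of $(y-z)(P_1-P_2)$, which is non-negative whenever $P_1\ge P_2$. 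If instead $P_1<P_2$, I place $P_2$ on $(w,z)$ and $P_1$ on $(x,y)$; the analogous calculation yields a rate difference whose sign equals $(w-x)(P_2-P_1)\ge 0$. For the segregated-to-nested case, I first pass to the moderate pairing with the same powers; this step alone gains $\log\frac{(1+xP_1)(1+xP_2)}{(1+yP_1)(1+yP_2)}>0$ since $x>y$. I then apply the moderate-to-nested step.

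\emph{Step 2 (Induction on $N$).} Let $\sigma$ arrange the $2N$ gains in non-increasing order. Given any pairing $\pi$ with any admissible power allocation, if $\sigma(1)$ and $\sigma(2N)$ are matched in $\pi$, I recurse on the restriction of $\pi$ to the remaining $2N-2$ channels. Otherwise, $\sigma(1)$ is paired with some $\sigma(j)$ and $\sigma(2N)$ with some $\sigma(i)$, where $1<i<2N$ and $j\ne 2N$. Sorting the four gains, the restriction of $\pi$ to $\{\sigma(1),\sigma(i),\sigma(j),\sigma(2N)\}$ is segregated when $i>j$ and moderate when $i<j$, so Step~1 applies. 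I replace the two offending pairs by $(\sigma(1),\sigma(2N))$ and $(\sigma(i),\sigma(j))$, reallocating their powers as dictated by Step~1; the total transmit power is preserved and the overall rate does not decrease. I then invoke the induction hypothesis on the remaining $2N-2$ channels with their unchanged remaining total power, which yields the extreme pairing on those channels. Adjoining $(\sigma(1),\sigma(2N))$ gives the extreme pairing of all $2N$ channels with rate at least $R$.

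\emph{Main obstacle.} The real work sits in Step~1. What makes it slightly delicate is that the ``obvious'' reallocation (keep the powers attached to the pairs that inherited them) only dominates in one direction; the other direction forces one to swap the two powers between the two pairs. Verifying that one of these two canonical reallocations always beats the original moderate or segregated configuration, for every choice of $(P_1,P_2)$, is the only non-trivial algebraic step; the rest of the argument is a clean combinatorial induction.
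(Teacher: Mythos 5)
Your proposal is correct, and while it rests on the same local exchange calculations that the paper uses, the overall proof architecture is genuinely different. The paper first isolates the algebraic inequality $(1+P_1\alpha)(1+P_2\beta)>(1+P_1\beta)(1+P_2\alpha)$ for $P_1>P_2$, $\alpha>\beta$ (its Lemma~\ref{lemma1}), then argues by contradiction: it derives three structural properties that any optimal pairing must satisfy — no ``segregated'' quadruple $\lambda_{\ell(l)}>\lambda_{k(l)}>\lambda_{\ell(j)}>\lambda_{k(j)}$ (Lemma~\ref{lemma2}, which is precisely your segregated-to-moderate move); all $\ell$-gains dominate all $k$-gains (Lemma~\ref{lemma3}); and the $\ell$- and $k$-gains are anti-monotone across pairs (Lemma~\ref{lemma4}, whose Case~1/Case~2 split matches your ``keep powers vs.\ swap powers'' dichotomy for the moderate-to-nested move) — and finally verifies that these properties force the nested pairing by constructing $\sigma$ from a supposed optimum. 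Your proof instead starts from an \emph{arbitrary} pairing and power allocation and constructively transforms it into the nested pairing by induction on $N$, peeling off $(\sigma(1),\sigma(2N))$ at each step, with the four-gain exchange lemma ensuring the rate never drops and the total power is conserved. The two proofs share the identical analytic core (your sign conditions $(y-z)(P_1-P_2)$ and $(w-x)(P_2-P_1)$ are equivalent to the paper's Lemma~\ref{lemma1}), but your constructive induction replaces the paper's chain of contradiction lemmas and the bookkeeping needed to assemble $\sigma$ at the end; the result is a logically tighter and, in my view, more transparent argument, at no cost in generality. One caveat common to both proofs (and implicitly assumed in your ``$w>x>y>z$'') is that the gains are distinct, which the paper addresses by noting this holds with probability one.
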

The proof of the theorem is given in the appendix.
In the theorem, the permutation $\sigma$ is such that
$\lambda_{\sigma(l)}\geq \lambda_{\sigma(l+1)} \; \forall \; l<2N$.
This result reduces the problem of joint optimization of
\eqref{eq:opt1} for the best pairing and power allocation to
separate optimization of the pairing and the power allocation among the pairs of channels.
With high probability, all the gains $(\lambda_1,\cdots, \lambda_{2N})$ are distinct. 
Under this high probability event, Theorem~\ref{thm1} gives a unique optimal pairing. 
We now find the optimal power allocation.

\noindent
{\it Optimal Power Allocation:}
In light of Theorem \ref{thm1}, we assume that the channels are ordered such that 
\begin{equation*}
\lambda_{l} \geq \lambda_{l+1} \quad \mathrm{for} \; 1\leq l<2N
\end{equation*}
and the channel with gain $\lambda_l$ is paired with the channel with
gain $\lambda_l'$, where $\lambda_l' = \lambda_{2N-l+1}$.
Then for a given power allocation $P_l; 1\leq l\leq N$, the achieved rate is 
\begin{equation}
R(P_1,\cdots,P_N)=\sum_{l=1}^{N} \log (1+P_{l}\lambda_{l}^{2}) - \sum_{l=1}^{N} \log (1+P_{l}\lambda_{l}'^{2}). \label{eq:rate} \notag \\
\end{equation}
We need to maximize this with respect to the $P_l$s under the condition 
$$ \sum_{l=1}^{N}P_{l}\leq \frac{P}{2}. $$
Similar optimization was needed for power allocation over different fading states for block fading wiretap channel \cite{Gopala2008}.
This can be solved by defining the  Lagrangian objective function  
%Thus the Lagrangian objective function is
\begin{equation*}
 J=R(P_1,\cdots,P_N)-\eta\left( \sum_{l=1}^{N}P_{l} -\frac{P}{2} \right).
\end{equation*} 
% Thus an optimal power allocation should satisfy
% \begin{eqnarray*}
% \frac{\partial J}{\partial P_{l}}& =&  \;\frac{\lambda_{l}^{2}}{1+P_{l}\lambda_{l}^{2}} -\;\frac{\lambda_{l}'^{2}}{1+P_{l}\lambda_{l}'^{2}} -\eta =0
% \end{eqnarray*}
% with $P_{l}>0$. So when $\lambda_{l}' \neq 0$, for $P_l>0$, we have,
% 
% \begin{eqnarray}
% &\Rightarrow & \frac{1}{P_{l}+1/\lambda_{l}^{2}} - \frac{1}{P_{l}+1/\lambda_{l}'^{2}}-\eta =0 \notag \\
% & \Rightarrow &  \eta \left( P_{l}+1/\lambda_{l}^{2} \right)  \left( P_{l}+1/\lambda_{l'
% }^{2} \right)\notag \\
% && \phantom{xxxxxxxxx} + P_l+1/\lambda_{l}^{2}-P_l-1/\lambda_{l}'^{2}=0 \notag \\
% &\Rightarrow & P_{l}^{2}+ P_{l}\left( 1/\lambda_{l}^{2}+1/\lambda_{l}'^{2}\right) + \frac{1}{\lambda_{l}^{2}\lambda_{l}'^{2}} \notag \\
% && \phantom{xxxxxxxxxxxxxxxxx} =\frac{1/\lambda_{l}'^{2}-1/\lambda_{l}^{2}}{\eta}  \notag \\
% & \Rightarrow &\left( P_{l} + \frac{1}{2}\left( \frac{1}{\lambda_{l}^{2}} +\frac{1}{\lambda_{l}'^{2}} \right)  \right)^{2}  \notag \\
% && \phantom{x} =  \frac{\lambda_{l}^{2}-\lambda_{l}'^{2}}{\eta \lambda_{l}^{2}\lambda_{l}'^{2}} -\frac{1}{\lambda_{l}^{2}\lambda_{l}'^{2}}  + \frac{(\lambda_{l}^{2}+\lambda_{l}'^{2})^{2}}{4 \lambda_{l}^{4}\lambda_{l}'^{4}}  \notag \\
% && \phantom{x} =  \frac{\lambda_{l}^{2}-\lambda_{l}'^{2}}{\eta \lambda_{l}^{2}\lambda_{l}'^{2}}  + \frac{(\lambda_{l}^{2}-\lambda_{l}'^{2})^{2}}{4 \lambda_{l}^{4}\lambda_{l}'^{4}}  \label{eq:power1}
% \end{eqnarray}
The optimal power allocation is given by 
\begin{align*}
P_l & = \begin{cases}
         \left( \left( f(\lambda_{l},\lambda_{l}', \eta)\right) ^{1/2} -\frac{1}{2}\left( \frac{1}{\lambda_{l}^{2}} +\frac{1}{\lambda_{l}'^{2}} \right)  \right)^+ & \text{ if }\lambda_{l}' \neq 0\\
        \left(\frac{1}{\eta} - \frac{1}{\lambda_l^2}\right)^+ & \text{ if } \lambda_{l}' = 0
      \end{cases} 
\end{align*}
where 
$$f(\lambda_{l},\lambda_{l}', \eta) 
% & = 
% \frac{\lambda_{l}^{2}-\lambda_{l}'^{2}}{\eta \lambda_{l}^{2}\lambda_{l}'^{2}}  + \frac{(\lambda_{l}^{2}-\lambda_{l}'^{2})^{2}}{4 \lambda_{l}^{4}\lambda_{l}'^{4}} \\
= \frac{1}{4}\left(\frac{1}{\lambda_{l}'^2} - \frac{1}{\lambda_{l}^2}\right)\left[
\left(\frac{1}{\lambda_{l}'^2} - \frac{1}{\lambda_{l}^2}\right) + \frac{4}{\eta}\right],$$\\
and $\eta$ is determined by the condition 
$$ \sum_{l=1}^{N}P_{l} = \frac{P}{2}. $$

% \begin{eqnarray*}
%  P_{l}  =  \big( f(\lambda_{l},\lambda_{l}', \eta)\big) ^{1/2} -\frac{1}{2}\left[ \frac{1}{\lambda_{l}^{2}} +\frac{1}{\lambda_{l}'^{2}} \right],
% \end{eqnarray*}
% 
% 
% If $\lambda_{l}' = 0$ and $P_l>0$, then
% $P_l  = \frac{1}{\eta} - \frac{1}{\lambda_l^2}$.\\
% Thus $P_l$s are given by 

\noindent
{\it Power allocation across coherence blocks:} If variable amount of average
power is allowed to be transmitted in different blocks under a long term
average power constraint, then potentially higher rates are achievable.
Let $(\lambda_1,\lambda_2,\cdots, \lambda_{2N})$ denote the random vector that represents
the ordered (non-increasing) channel vector in a block. The optimum 
pairing in each block 
is still as given by Theorem~\ref{thm1}. The optimal power allocation is
the maximizer of the expected rate
\begin{align*}
& R=E\left[\sum_{l=1}^{N}\left( \log (1+P_{l}(\underline{\blambda})\lambda_{l}^{2})
\right.\right. \\
& \hspace*{35mm}\left.\left. - \log (1+P_{l}(\underline{\blambda})\lambda_{2N-l+1}^{2})\right)\rule{0pt}{18pt}\right]
\end{align*}
under the average power constraint
\begin{equation*}
E\left[\sum_{l=1}^{N}P_{l}(\underline{\blambda})\right] \leq \frac{P}{2}.
\end{equation*}
By similar steps as before, the solution is given by
% \begin{eqnarray}
%  P_{i}(\underline{\lambda})  = \left( \left( f(\lambda_{i},\lambda_{i}', \eta)\right) ^{1/2} -\frac{1}{2}\left( \frac{1}{\lambda_{i}^{2}} +\frac{1}{\lambda_{i}'^{2}} \right)  \right)^+, \label{Power_Allocation2} \notag
% \end{eqnarray}
\begin{align*}
P_l(\underline{\blambda}) & = \begin{cases}
         \left( \left( f(\lambda_{l},\lambda_{l}', \eta)\right) ^{1/2} -\frac{1}{2}\left( \frac{1}{\lambda_{l}^{2}} +\frac{1}{\lambda_{l}'^{2}} \right)  \right)^+ & \text{ if }\lambda_{l}' \neq 0\\
        \left(\frac{1}{\eta} - \frac{1}{\lambda_l^2}\right)^+ & \text{ if } \lambda_{l}' = 0.
      \end{cases} 
\end{align*}
where $\eta$ is a global constant 
 determined by the condition
\begin{equation}
 E\left[\sum_{l=1}^{N}P_{l}(\underline{\blambda})\right] = \frac{P}{2}. \label{Pow_Con}
\end{equation}
Here $\eta$ depends only on the channel statistics and $P$.

%\section{Oblivious streaming over a channel with state}
%\label{sec:streaming}
%\input{streaming_v1.tex}

\section{High SNR asymptotics}
\label{sec:high}
Let us consider a set of parallel channels. We want to study the
asymptotic expected rate. Let us consider a fixed ordered channel vector
$(\lambda_1, \lambda_2, \cdots, \lambda_{2N})$ to start with. Note that in the case
of a $(2N\times n_B)$ MIMO system with precoding, there are 
$2N$ channels. If $n_B\leq N$, then there are $n_B$ useful pairs of channels
with channel gains $(\lambda_1, \lambda'_{1}), (\lambda_2, \lambda'_{2}), \cdots, (\lambda_{n_B},\lambda'_{n_B})$,
where $\lambda'_{l}=\lambda_{2N-l+1} = 0$, for $l=1,2,\cdots, n_B$. If $N<n_B <2N$, then
there are $N$ pairs. $(2N-n_B)$ of them have the second channel gain
zero, more specifically, $\lambda'_{1}=\cdots = \lambda'_{(2N-n_B)} =0$.

Clearly, $\eta \rightarrow 0$ as $P\rightarrow \infty$. So, $P_l \rightarrow
\infty$ as $P\rightarrow \infty$. Now, for a pair of channels with
$\lambda'_{l} =0$, the rate contributed by the pair is\footnote{Here we mean $R_l - \log (P_l \lambda_l^2) \rightarrow 0$ as $P \rightarrow \infty$}
\begin{align}
R_l & = \log \left(1+P_l \lambda_l^2\right)\notag\\
& \rightarrow \log (P_l \lambda_l^2). \label{Eq:asy_power}
\end{align}
%\footnotetext{Here we mean $R_l - \log (P_l \lambda_l^2) \rightarrow 0$ as $P \rightarrow \infty$}

For such a channel pair, 
\begin{align}
P_l & = \frac{1}{\eta}\left(1 - \frac{\eta}{\lambda_l^2}\right)\notag\\
\Rightarrow \eta P_l & \rightarrow 1 \phantom{xxx} \text{as } 
\eta \rightarrow 0 \label{eq:asy1}
\end{align}
When $\lambda'_{l} \neq 0$ and $\lambda_l \neq \lambda'_{l}$, as  $\eta \rightarrow 0$,
\begin{align}
\sqrt{\eta} P_l & \rightarrow \left(\frac{1}{\lambda_{l}'^2} - \frac{1}{\lambda_{l}^2}\right)^{\frac{1}{2}}. \label{eq:asy3}
\end{align}
So, for such channel pairs,
\begin{align}
R_l & = \log \left(1+P_l \lambda_l^2\right) - \log \left(1+P_l \lambda_{l}'^2\right)\notag\\
& \rightarrow \log \left(\frac{\lambda_l^2}{\lambda_{l}'^2}\right) \phantom{xxx} \text{as }
P\rightarrow \infty .\label{Asym_Rate}
\end{align}
Now, using \eqref{eq:asy1} and \eqref{eq:asy3}, the power constraint gives
\begin{align}
&\eta P \rightarrow 2(2N-n_B) \label{eq:asy2}
%& \frac{1}{\eta} \left((2N-n_B) + \sqrt{\eta}\sum_{i=2N-n_B+1}^N \left(\frac{1}{\lambda_{i'}^2} - \frac{1}{\lambda_{i}^2}\right)^{\frac{1}{2}}\right) = P
\phantom{xx} \text{as } P\rightarrow \infty .
\end{align}
% Let us consider the case $N<n_B \leq 2N$.
% For every $\delta >0$, we can find $M_\delta$ such that
% $Pr\left(\sum_{i=2N-n_B+1}^N \left(\frac{1}{\lambda_{i'}^2} - \frac{1}{h_{i}^2}\right)
% <M_\delta \right) > 1-\delta$. So, with probability at least $1-\delta$,
% \begin{align*}
%  & \frac{1}{\eta} \left((2N-n_B) + M_\delta \right) \geq P\\
% \text{i.e.} \phantom{xx} & \eta \leq \left((2N-n_B) + M_\delta \right)/P.
% \end{align*}
% So, with probability at least $1-\delta$,
% \begin{align*}
% P \geq P_i & \geq \frac{P}{(2N-n_B) + M_\delta} \phantom{xxx}\text{when }
% h_{i'}\neq 0.\\
% \end{align*}
%When $\lambda_{i'}=0$, $P_i \rightarrow \frac{P}{2N-n_B}$ as $P\rightarrow \infty$.
Inspired by similar concepts for communication over MIMO channels,
it is reasonable to define the {\it OT-multiplexing gain} as
\begin{align*}
\mu_{OT}=\lim_{P\rightarrow \infty} \frac{E\left[\sum_i R_i\right]}{\log P}.
\end{align*}

So, 
\begin{align}
\mu_{OT} & = \lim_{P\rightarrow \infty} \frac{E\left[\sum_{l:\lambda_{l}'=0} R_l\right]}{\log P} \quad \left(\text{using} \eqref{Asym_Rate}\right) \notag\\
& = \lim_{P\rightarrow \infty} \frac{E\left[\sum_{l:\lambda_{l}'=0} \log(P_l)\right]}{\log P} \quad \left(\text{using} \eqref{Eq:asy_power}\right) \notag\\
& = \lim_{P\rightarrow \infty} \frac{E\left[\sum_{l:\lambda_{l}'=0} (\log(P_l)-\log(\eta P_l))\right]}{\log P - E(\log (\eta P))}\label{eq:mux}\\
& = \lim_{P\rightarrow \infty} \frac{E\left[\sum_{l:\lambda_{l}'=0} (-\log(\eta))\right]}{- E(\log (\eta ))}\notag\\
& = E\left[|\{l:\lambda_{l}'=0\}|\right]\notag
\end{align}
Here \eqref{eq:mux} follows from \eqref{eq:asy1} and \eqref{eq:asy2}.
Thus our protocol achieves the OT-multiplexing gain of
\begin{align*}
\mu_{OT}& = \begin{cases}
n_B & \text{ if } n_B\leq N \\
2N-n_B & \text{ if } N<n_B \leq 2N\\
0 & \text{ if } n_B\geq 2N.
\end{cases}
\end{align*}

In contrast, for communication over a $2N\times n_B$ 
MIMO channel, the multiplexing gain is $\min\{n_B, 2N\}$.
For $n_B \geq 2N$, the average OT rate converges to a constant as $P \rightarrow \infty $.
This can be seen as a consequnce of the fact that
the secrecy capacity of the Gaussian wiretap channel goes to a constant as $P \rightarrow \infty $.

\section{Numerical results}
\label{sim:results}
In this section, we provide numerical results of our OT protocols for some simple MIMO and OFDM channels which include the examples discussed in 
Section~\ref{sec:scheme}. 

In Fig.~\ref{sim:MIMO}, we plot the OT rate of our protocol for $2\times 1$ and $2\times 2$ MIMO channels.
%The channel models for these channels are given in Section~\ref{sec:scheme}. 
The average OT rate is numerically evaluated using Monte Carlo simulation methods for SNR varying from 0 dB to 50 dB.
The channel capacities for these channels with CSIT are also numerically evaluated and shown.
It can be seen that OT rate of $2 \times 1$ MIMO channel at SNR $P$ dB is approximately equal to the capacity 
of $2 \times 1$ MIMO channel with CSIT at 3 dB lower transmit power. 
This is due to the fact that in our OT protocol, half of the power is given to the null-space of ${\bf H}$
which is useless for communication. OT rate of
$2 \times 1$ MIMO channel increases at the rate of $1$ bit/3dB, as 
$\mu_{OT}=1$.

\begin{figure}[htbp]
\centering
\includegraphics[height=50mm,width=\columnwidth]{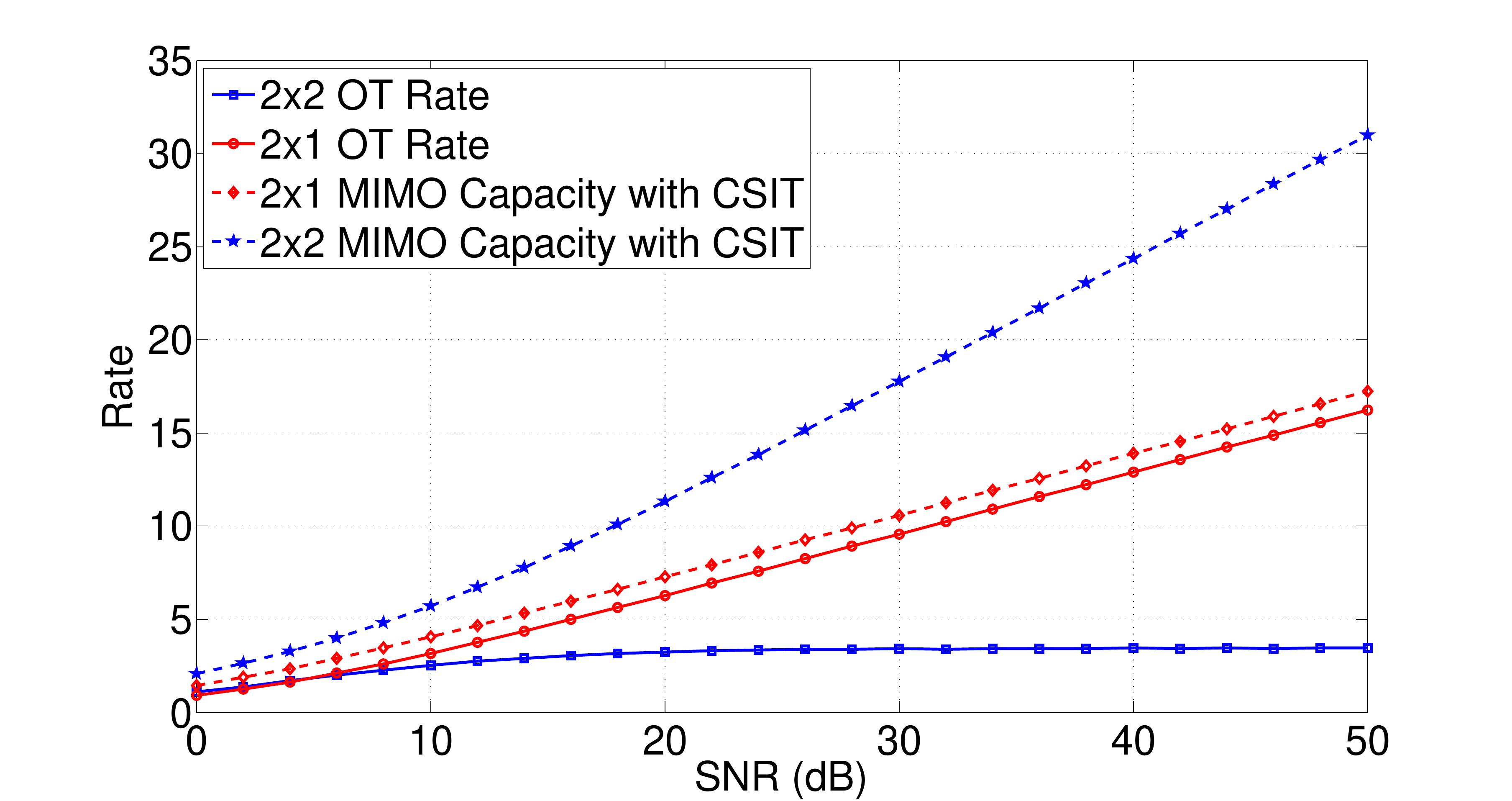}
\caption{OT Rate and MIMO capacity versus SNR for $2\times 1 ,2\times 2$ MIMO}
\label{sim:MIMO}
\end{figure}

Using \eqref{Asym_Rate} we see that at very high SNR, the OT rate
for $2\times 2$ MIMO system is given by
$R \approx E\left[\log \left(\frac{\lambda_0^2}{\lambda_1^2}\right)\right] $. 
% Here note that $\frac{\lambda_0}{\lambda_1}$ is the condition number of the matrix ${\bf H}$. 
% Since $R$ is bounded above and non-decreasing with SNR, OT rate for  $2 \times 2$ MIMO channel converges to a constant as SNR increases. 
Recall that $\lambda_0^2, \lambda_1^2$ are the eigenvalues of the Wishart matrix ${\bf HH}^{\dagger}$. 
The joint p.d.f. of the ordered eigenvalues, $\gamma_0=\lambda_0^2, \gamma_1
=\lambda_1^2$, is given by 
$e^{-(\gamma_0+\gamma_1)}(\gamma_0-\gamma_1)^2$ \cite[Theorem~2.17]{Tulino2004}.
The asymptotic value of the OT rate is thus
\begin{align*}
E\left[\log\left(\frac{\gamma_0}{\gamma_1}\right)\right] 
 & =  \int\limits_0^{\infty} \int\limits_0^{\gamma_0} \log\left(\frac{\gamma_0}{\gamma_1}\right) e^{-(\gamma_0+\gamma_1)}(\gamma_0-\gamma_1)^2 d\gamma_1 d\gamma_0 \\
 & =  1+ 2 \ln (2) \; \textrm{nats} \approx 3.45 \;\textrm{bits}.
\end{align*}

\begin{figure}[htbp]
\centering
\includegraphics[height=50mm,width=1\columnwidth]{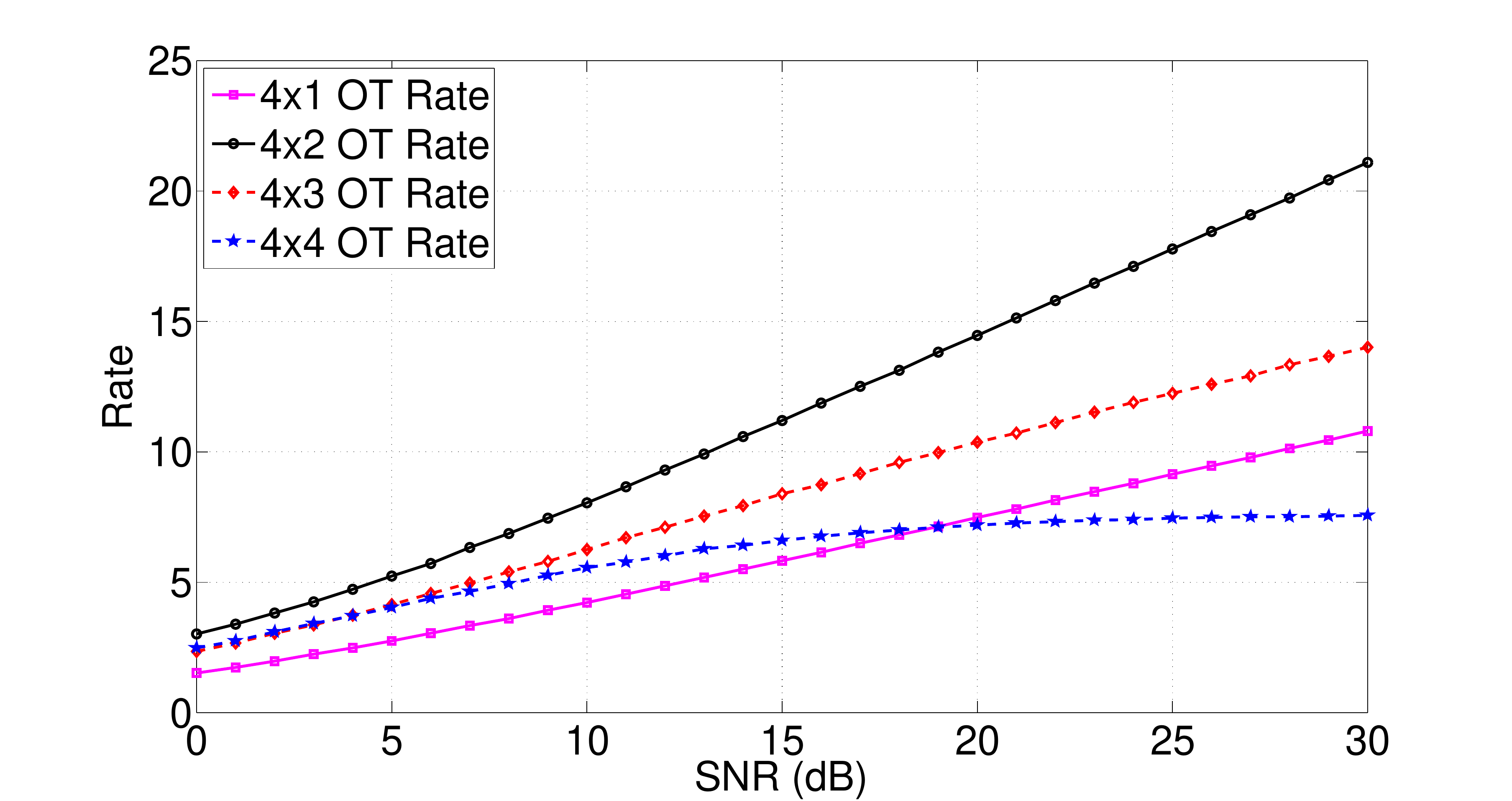}
\caption{OT Rates for MIMO with $n_A=4$ transmit antennas, and $n_B=1,2,3,4$ receive antennas}
\label{sim:MIMO4by4}
\end{figure}

In Fig.~\ref{sim:MIMO4by4}, OT rates for MIMO with $n_A = 4$ and $1 \leq n_B \leq 4$ are shown as a function of SNR. As expected from
 Section~\ref{sec:high}, the best OT rate is achieved 
when $n_B=n_A/2=2$, with asymptotic slope of 2 bits/3dB ($\mu_{OT} = 2$). 
The asymptotic slope for $n_B=1$ and $n_B=3$ is 1 bit/3dB ($\mu_{OT}=1$).
For $n_B\geq 4$, $\mu_{OT}=0$, and the rate  is bounded.

% But as SNR increases, $2 \times 1$
% MIMO channel performs better than $2 \times 2$ MIMO channel. 
% OT rate increases linearly as the SNR increases for $2 \times 1$ channel, but for $2 \times 2$ channel the OT rate
% converges to a value.
\begin{figure}[htbp]
\centering
\includegraphics[height = 50mm,width = \columnwidth]{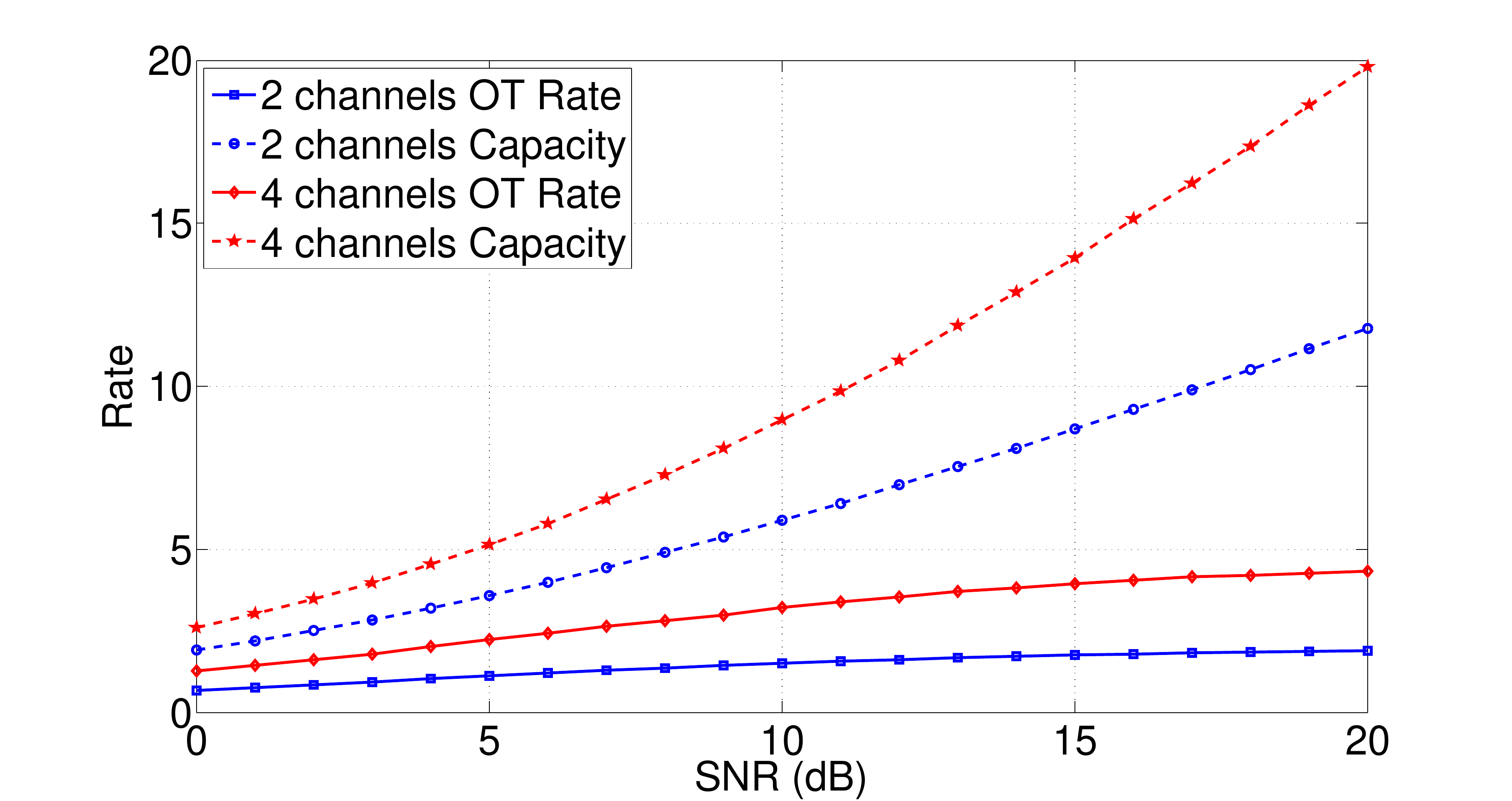}
\caption{OT Rate and OFDM capacity versus SNR for 2, 4 Channels OFDM}
\label{sim:OFDM} 
\end{figure}

In Fig.~\ref{sim:OFDM}, we show the OT rate for 2-channels OFDM and 4-channels OFDM, along
with the capacities of the corresponding channels. 
The OT rate of 2-channel OFDM converges to a constant as SNR 
increases, since $\mu_{OT}=0$. To find this constant, we note that $|H_0|$ 
and $|H_1|$ are i.i.d. with Rayleigh distribution. So
$|H_0|^2$ and $|H_1|^2$ have exponential distribution. 
Let $S =  \max(|H_0|^2, |H_1|^2)$ and 
$T = \min (|H_0|^2, |H_1|^2)$. Then the probability density functions of $S$ and $T$
are $2(1-e^{-s})e^{-s}$ and $2e^{-2t}$ respectively.
As SNR increases, the OT rate for our protocol converges to 
\begin{align*}
E[\log(S/T)] & = \int\limits_0^{\infty} \int\limits_0^{\infty} \log(s/t) 2(1-e^{-s})e^{-s} 2e^{-2t} ds dt \\
& = 2 \ln (2)\;\text{nats} = 2 \; \text{bits}.
\end{align*}
The OT rate of 4-channels OFDM also converges to a constant and $\mu_{OT}=0$.

\section{Conclusion}
\label{sec:conc}
We presented a technique for OT over parallel fading AWGN channels
with receiver CSI with application to OFDM and MIMO. 
For privacy of Bob against Alice, our techniques use primarily Bob's 
exclusive knowledge of the fading states, whereas the additive noise
is utilized for privacy of Alice against Bob. 

In AWGN channels,
the noise realization is used to perform OT in~\cite{Isaka, IsakaJournal}.
Following similar principle, the noise realization
can potentially be further utilized in our setup to achieve better rate.
In particular, for a single point-to-point fading channel or for parallel
fading channels with the same fading coefficient,
an obvious scheme is for Bob to first reveal the 
channel state to Alice over the noise-free channel. Then they can follow a
protocol suitable for the resulting AWGN channel. 
However, as pointed out in~\cite{IsakaJournal}, the OT rate saturates
to a constant as $P\rightarrow \infty$ in AWGN channels. Thus further
utilization of the noise realization in our protocol will not only result
in a much more complex protocol, but it will also not provide any additional asymptotic
OT-multiplexing gain. 

With an odd number of OFDM channels, or an odd number of transmit
antennas in a MIMO system, we have an odd number of parallel
channels. In such a case, our protocol will leave one channel of
middle rank in strength unused. That channel-state can be revealed to Alice
by Bob, and the OT protocol of \cite{IsakaJournal} can be used in the
resulting AWGN channel. This also does not give any asymptotic ($P\rightarrow
\infty$) improvement in terms of multiplexing gain.

Altogether, the technique
proposed in this paper can be an important tool for performing OT efficiently
over wireless channels.

\appendices
\section{Proof of Theorem~\ref{thm1}}
\label{sec:app1}
\begin{lemma}   \label{lemma1}
If $P_1>P_2$, $\alpha>\beta$, then $(1+P_1\alpha)(1+P_2\beta) > (1+P_1\beta)(1+P_2\alpha)$. 
\end{lemma}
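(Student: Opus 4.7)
The plan is to verify this purely by direct algebraic expansion; it is really just the classical rearrangement-type observation that the "aligned" pairing of larger-with-larger beats the "crossed" pairing.

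First, I would expand both products. The left-hand side becomes
\[
(1+P_1\alpha)(1+P_2\beta) = 1 + P_1\alpha + P_2\beta + P_1P_2\alpha\beta,
\]
and the right-hand side becomes
\[
(1+P_1\beta)(1+P_2\alpha) = 1 + P_1\beta + P_2\alpha + P_1P_2\alpha\beta.
\]
The constant terms and the quadratic term $P_1P_2\alpha\beta$ cancel when one subtracts.

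Next, I would compute the difference and factor it. What remains is
\[
P_1\alpha + P_2\beta - P_1\beta - P_2\alpha = (P_1-P_2)\alpha - (P_1-P_2)\beta = (P_1-P_2)(\alpha-\beta).
\]
Under the hypotheses $P_1>P_2$ and $\alpha>\beta$, both factors $P_1-P_2$ and $\alpha-\beta$ are strictly positive, so their product is strictly positive, which gives the claimed strict inequality.

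There is essentially no obstacle here: the only ``trick'' is to notice the cancellation of the $1$ and of $P_1P_2\alpha\beta$, after which the difference factors as a product of two positive quantities. This lemma will presumably be used inductively in the appendix to swap a ``crossed'' pair of channels in a candidate pairing for the ``sorted'' pair, showing that such a swap strictly increases the objective in \eqref{eq:opt1}, and thereby establishing that the best-with-worst pairing of Theorem~\ref{thm1} is optimal.
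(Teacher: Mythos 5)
Your proof is correct. It takes a genuinely different route from the paper's: you expand both products, cancel the common terms $1$ and $P_1P_2\alpha\beta$, and factor the difference as $(P_1-P_2)(\alpha-\beta)>0$ --- a one-line rearrangement argument requiring no tools beyond arithmetic. The paper instead proves a small calculus lemma: that $f(\alpha)=\frac{x+\alpha}{y+\alpha}$ is strictly decreasing in $\alpha$ when $x>y>0$ (by computing $f'(\alpha)=\frac{y-x}{(y+\alpha)^2}<0$), applies it with $x=\alpha$, $y=\beta$ and the two values $\alpha$-substituted being $1/P_2>1/P_1$, and then cross-multiplies to reach the same conclusion. Both are valid; your version is shorter, more elementary, and makes the rearrangement-inequality structure transparent, whereas the paper's version isolates a monotone-ratio fact that is not reused elsewhere, so your argument is arguably preferable here. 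Your closing remark about the role of the lemma is also accurate: it is used in Lemma~\ref{lemma2} and Lemma~\ref{lemma4} to show that ``uncrossing'' a pair of channels strictly increases the rate objective in \eqref{eq:opt1}, which drives the proof of Theorem~\ref{thm1}.
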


\begin{proof}
We first note the following basic fact.

{\it Claim:} If $x,y>0$ , $\frac{x}{y}>1$, then $f(\alpha) = \frac{x+\alpha}{y+\alpha}$ is a monotonically
decreasing function of $\alpha$.

{\it Proof of the claim:}
 It can be easily checked that $\frac{df}{d\alpha} = \frac{y-x}{(y+\alpha)^2} < 0 \; \; \forall \alpha$.
 Thus the claim follows.

Now by the hypothesis of the lemma, $\alpha>\beta$ and $\frac{1}{P_1}< \frac{1}{P_2}$. Thus by the above claim,
 
 \begin{align*}
  \frac{\alpha+\frac{1}{P_2}}{\beta+\frac{1}{P_2}} &< \frac{\alpha+\frac{1}{P_1}}{\beta+\frac{1}{P_1}}\\
  \implies \frac{\alpha P_2+1}{\beta P_2+1} &< \frac{\alpha P_1+1}{\beta P_1+1}\\
  \implies (1+\alpha P_1 )(1+\beta P_2) &> (1+\alpha P_2)(1+\beta P_1)
 \end{align*} 
\end{proof}

\begin{lemma}  \label{lemma2}
 For any $l,j \in \{1,2,\dots, N\}$, an optimal protocol can not have $\lambda_{\ell(l)} > \lambda_{k(l)}> \lambda_{\ell(j)}> \lambda_{k(j)}$.
\end{lemma}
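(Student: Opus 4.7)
The plan is to prove the lemma by contradiction through a local re-pairing argument that leaves all transmit powers untouched, so no joint optimization over the $P_l$'s is required. Suppose, for contradiction, that an optimal protocol has indices $l, j$ with $\lambda_{\ell(l)} > \lambda_{k(l)} > \lambda_{\ell(j)} > \lambda_{k(j)}$; I abbreviate these four gains as $a > b > c > d$ and write $P_l, P_j$ for the powers assigned to these two offending pairs.

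Next, I would exhibit a competing pairing obtained by the single swap $k(l) \leftrightarrow \ell(j)$: the new pair indexed $l$ contains the channels with gains $(a,c)$ and the new pair indexed $j$ contains the channels with gains $(b,d)$, while every other pair and the entire power allocation are kept as they were. Before declaring this a valid OT-pairing, one checks the three conditions in the definition; the only non-trivial one is $\lambda_{\ell} > \lambda_{k}$, and that follows at once from $a > c$ and $b > d$.

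The decisive calculation is then a direct comparison of the two contributions. The sum of rates on pairs $l, j$ before the swap is
\[
\log\frac{(1+P_l a^2)(1+P_j c^2)}{(1+P_l b^2)(1+P_j d^2)},
\]
and after the swap it is
\[
\log\frac{(1+P_l a^2)(1+P_j b^2)}{(1+P_l c^2)(1+P_j d^2)}.
\]
Their difference collapses to
\[
\log\frac{(1+P_l b^2)(1+P_j b^2)}{(1+P_l c^2)(1+P_j c^2)},
\]
which is strictly positive since $b > c$ and $P_l, P_j > 0$. The modified protocol therefore achieves a strictly larger rate while respecting the same power budget, contradicting optimality.

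I do not expect a real obstacle: once the right swap is identified, the whole argument is a one-line algebraic identity, and Lemma~\ref{lemma1} is not actually needed at this step. I would expect Lemma~\ref{lemma1} to enter later, in a complementary argument that allows power reallocation and upgrades the ``no fully nested pair'' conclusion proved here into the ``best-with-worst'' pairing asserted in Theorem~\ref{thm1}.
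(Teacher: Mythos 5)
Your proposal is correct and uses essentially the same argument as the paper: swap $k(l)$ with $\ell(j)$ (so the new pairs carry gains $(a,c)$ and $(b,d)$), keep the powers $P_l,P_j$ fixed, and observe that the rate change telescopes to $\log\frac{(1+P_l b^2)(1+P_j b^2)}{(1+P_l c^2)(1+P_j c^2)}>0$ since $b>c$, contradicting optimality. Your side remark that Lemma~\ref{lemma1} is not needed here is also accurate; the paper invokes it only in Lemma~\ref{lemma4}, when the two powers differ.
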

\begin{proof}
We will show that under the above condition,  the pairing can be improved strictly with the same power allocation.
Let us consider another pairing defined by $\ell'(\cdot),k'(\cdot)$ such that

\begin{eqnarray*} 
 \ell'(t) &=& \ell(t) \; \; \forall t\neq j\\
 \ell'(j) &=& k(l)\\ 
 k'(t) &=&k(t) \: \: \forall t\neq l\\
 k'(l) &=& \ell(j) 
\end{eqnarray*}

That is, $k(l)$ and $\ell(j)$ are interchanged. Clearly $\ell',k'$ define a valid pairing. Consider the same power allocation. 
Only the rates $R_l,R_j$ will change to $R_l',R_j'$ (say).

\begin{eqnarray*} 
 R_l+R_j &=& \log \frac{(1+P_l\lambda_{\ell(l)}^2)(1+P_j\lambda_{\ell(j)}^2)} {(1+P_l\lambda_{k(l)}^2)(1+P_j \lambda_{k(j)}^2)}\\
 R_l'+R_j' &=&  \log \frac{(1+P_l\lambda_{\ell(l)}^2)(1+P_j\lambda_{k(l)}^2)} {(1+P_l\lambda_{\ell(j)}^2)(1+P_j \lambda_{k(j)}^2)}\\
 (R_l+R_j) - (R_l'+R_j') &=& \log \frac{(1+P_j\lambda_{\ell(j)}^2)(1+P_l\lambda_{\ell(j)}^2)}{(1+P_l\lambda_{k(l)}^2)(1+P_j\lambda_{k(l)}^2)} \\
& < & 0 \; (\mbox{since} \: \lambda_{\ell(j)}<\lambda_{k(l)} )
\end{eqnarray*}

Thus $R_l'+R_j'> R_l+R_j$ . Since $R_t= R_t' \; \forall t \neq l,j$, the new pairing gives more rate with the same
power allocation.
\end{proof}
\begin{lemma} \label{lemma3} 
For an optimal protocol 
\begin{equation*}
\lambda_{\ell(l)}\geq \lambda_{k(j)} \quad \forall \; l,\;  j.
\end{equation*}
\end{lemma}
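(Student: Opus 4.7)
The plan is to derive this directly from Lemmas~\ref{lemma1} and \ref{lemma2} together with condition~3 of the definition of an OT-pairing. The argument is a case-split on whether $l=j$ or $l\neq j$, assuming for contradiction that $\lambda_{\ell(l)} < \lambda_{k(j)}$ for some pair of indices.

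First, I would dispose of the case $l=j$. Here the assumed failure reads $\lambda_{\ell(l)} < \lambda_{k(l)}$, which immediately contradicts condition~3 of the OT-pairing definition (the rule that inside every pair the first index is the stronger channel). So we may assume $l \neq j$.

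Next, for $l\neq j$, I would combine the assumption $\lambda_{\ell(l)} < \lambda_{k(j)}$ with the two in-pair inequalities $\lambda_{\ell(l)} > \lambda_{k(l)}$ and $\lambda_{\ell(j)} > \lambda_{k(j)}$, chaining them to obtain
\begin{equation*}
\lambda_{\ell(j)} > \lambda_{k(j)} > \lambda_{\ell(l)} > \lambda_{k(l)}.
\end{equation*}
This is exactly the configuration forbidden by Lemma~\ref{lemma2} (with the roles of $l$ and $j$ interchanged): Lemma~\ref{lemma2} shows that interchanging $k(j)$ with $\ell(l)$ produces a new valid pairing which, under the same power allocation $(P_1,\ldots,P_N)$, strictly increases $R_l+R_j$ while leaving all other $R_t$ unchanged. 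Hence the original pairing cannot be optimal, contradicting our hypothesis.

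I do not expect any real obstacle; the potentially delicate point is strict vs.\ non-strict inequalities. Because the conclusion of the lemma is $\lambda_{\ell(l)} \geq \lambda_{k(j)}$, the contradiction only needs to rule out strict inequality $\lambda_{\ell(l)} < \lambda_{k(j)}$, and in that strict case the chain above is strict, so Lemma~\ref{lemma2} applies verbatim and gives a strictly better pairing via Lemma~\ref{lemma1}. Consequently $\lambda_{\ell(l)} \geq \lambda_{k(j)}$ must hold for every $l, j$, as claimed.
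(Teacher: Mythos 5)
Your proof is correct and follows essentially the same route as the paper's: assume $\lambda_{\ell(l)} < \lambda_{k(j)}$, chain it with the in-pair inequalities to obtain $\lambda_{\ell(j)} > \lambda_{k(j)} > \lambda_{\ell(l)} > \lambda_{k(l)}$, and invoke Lemma~\ref{lemma2} (with the roles of $l,j$ swapped) for the contradiction. Your explicit handling of the $l=j$ case and the remark on strict versus non-strict inequalities are small refinements the paper leaves implicit, but the argument is identical in substance.
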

\begin{proof} If this is not true, then suppose
\begin{equation*}
\lambda_{\ell(l)} < \lambda_{k(j)} \quad \mathrm{for \; some}\quad l, j
\end{equation*}
Then 
\begin{equation*}
\lambda_{\ell(j)} > \lambda_{k(j)}> \lambda_{\ell(l)} > \lambda_{k(l)}
\end{equation*}
which can not be true by Lemma~\ref{lemma2}.
\end{proof}

\begin{lemma} \label{lemma4} 
For an optimal protocol 
\begin{equation*}
\lambda_{\ell(l)}> \lambda_{\ell(j)} \Rightarrow  \lambda_{k(l)} \leq  \lambda_{k(j)}.
\end{equation*}
\end{lemma}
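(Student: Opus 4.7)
The plan is to argue by contradiction, extending the swap-and-compare technique of Lemma~\ref{lemma2} with the help of Lemma~\ref{lemma1} to control the sign of the rate change. Suppose an optimal protocol has indices $l,j$ with $\lambda_{\ell(l)} > \lambda_{\ell(j)}$ and, simultaneously, $\lambda_{k(l)} > \lambda_{k(j)}$, and let $P_l, P_j$ denote the optimal powers on slots $l$ and $j$. Writing $a = \lambda_{\ell(l)}$, $b = \lambda_{\ell(j)}$, $c = \lambda_{k(l)}$, $d = \lambda_{k(j)}$, Lemma~\ref{lemma3} forces $b \geq c$, so the four gains satisfy $a > b \geq c > d$.

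I would then analyze two candidate re-pairings, each affecting only slots $l$ and $j$ and each keeping the power allocation $(P_l, P_j)$ unchanged. \emph{Swap A} fixes the $\ell$-elements and interchanges the two $k$-elements, replacing the pairs $(a,c),(b,d)$ by $(a,d),(b,c)$; the change in the total rate reduces to
$$\Delta_A \;=\; \log \frac{(1+P_l c^2)(1+P_j d^2)}{(1+P_l d^2)(1+P_j c^2)}.$$
By Lemma~\ref{lemma1} with $\alpha = c^2, \beta = d^2$, the sign of $\Delta_A$ agrees with the sign of $P_l - P_j$. \emph{Swap B} fixes the $k$-elements and interchanges the two $\ell$-elements, yielding
$$\Delta_B \;=\; \log \frac{(1+P_l b^2)(1+P_j a^2)}{(1+P_l a^2)(1+P_j b^2)},$$
whose sign, by Lemma~\ref{lemma1} with $\alpha = a^2, \beta = b^2$, agrees with the sign of $P_j - P_l$. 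Hence, whenever $P_l \neq P_j$, at least one of the two swaps produces a pairing (with the same total power) that strictly outperforms the presumed optimum, a contradiction.

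The main obstacle is the knife-edge case $P_l = P_j =: P^*$, where both $\Delta_A$ and $\Delta_B$ vanish and no pure re-pairing suffices. My plan here is to combine Swap~A with an infinitesimal redistribution of power: shift $\delta$ units from slot $j$ into slot $l$ in the swapped pairing, giving
$$R(\delta) \;=\; \log \frac{(1+(P^*+\delta)a^2)(1+(P^*-\delta)b^2)}{(1+(P^*+\delta)d^2)(1+(P^*-\delta)c^2)}.$$
Since $R(0)$ equals the original rate, it suffices to show $R'(0) > 0$. Introducing the strictly increasing function $g(x) = x/(1+P^* x)$ on $x \geq 0$, direct differentiation gives
$$R'(0) \;=\; \bigl[g(a^2) - g(b^2)\bigr] + \bigl[g(c^2) - g(d^2)\bigr],$$
and both brackets are strictly positive by $a > b$ and $c > d$. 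A sufficiently small $\delta > 0$ therefore increases the rate above the presumed optimum, delivering the final contradiction. Together, the three cases $P_l > P_j$, $P_l < P_j$, and $P_l = P_j$ exhaust all possibilities and the lemma follows.
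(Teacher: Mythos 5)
Your proof is correct and, for the two cases $P_l>P_j$ and $P_l<P_j$, it is essentially the paper's argument: swap either the $k$-entries (your Swap A, their Case~1) or the $\ell$-entries (your Swap B, their Case~2) while holding the power allocation fixed, and invoke Lemma~\ref{lemma1} to obtain a strictly larger rate. Minor cosmetic difference: you derive the ordering $a>b\geq c>d$ by citing Lemma~\ref{lemma3}, whereas the paper deduces it directly from Lemma~\ref{lemma2}; both are valid since Lemma~\ref{lemma3} is itself an immediate consequence of Lemma~\ref{lemma2}.

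Where you genuinely add something is the knife-edge case $P_l=P_j$. The paper's proof splits only into $P_l>P_j$ and $P_l<P_j$ and is silent on equality; in that case both pure swaps give $\Delta_A=\Delta_B=0$ and the contradiction argument stalls. Your fix -- apply Swap~A and then transfer an infinitesimal $\delta>0$ of power from slot $j$ to slot $l$ -- is correct: with $g(x)=x/(1+P^*x)$ strictly increasing, $R'(0)=\bigl(g(a^2)-g(b^2)\bigr)+\bigl(g(c^2)-g(d^2)\bigr)>0$ under $a>b$ and $c>d$, so a small positive $\delta$ strictly beats the presumed optimum. This closes a real gap in the paper's proof. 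One residual caveat worth a sentence in a full write-up: your perturbation needs $\delta\leq P^*$, so it does not cover $P_l=P_j=0$; but that case is harmless, since a pair receiving zero power contributes zero rate under either ordering and one may simply relabel the pairing there without loss of generality.
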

\begin{proof} By contradiction, suppose $l, \; j$ are such that
\begin{equation*}
\lambda_{\ell(l)} > \lambda_{\ell(j)} \quad \mathrm{and}\quad \lambda_{k(l)} > \lambda_{k(j)}
\end{equation*}
\begin{equation*}
\Rightarrow \quad \lambda_{\ell(l)} > \lambda_{\ell(j)}> \lambda_{k(l)} > \lambda_{k(j)}
\end{equation*}
as $\quad \lambda_{\ell(l)} > \lambda_{k(l)}> \lambda_{\ell(j)} > \lambda_{k(j)} $
can not be true by Lemma~\ref{lemma2}.

\noindent \textbf{Case 1: } $P_{l} > P_{j}$ \\
By Lemma \ref{lemma1}, 
\begin{align}
& \log (1+P_{l}\lambda_{k(l)}^{2}) + \log (1+P_{j}\lambda_{k(j)}^{2}) \notag \\
& \phantom{xxxx} >  \log (1+P_{j}\lambda_{k(l)}^{2}) + \log (1+P_{l}\lambda_{k(j)}^{2}) \label{eq:1} 
\end{align}
Consider a different pairing $l, \;k'$ such that
\begin{equation*}
  k'(t)=\left\{
\begin{array}{cl}
 k(t) &  ; \; t \neq l, j \\
              k(l) & ; \;  t=j \\
              k(j) & ;\; t=l 
\end{array} \right. 
\end{equation*}
i.e. $k(l),\; k(j)$ are interchanged. 
Then the new rate $R'$ is such that
\begin{eqnarray*}
 R'-R & = & \sum_{t=1}^{N} (R'_{t}-R_{t}) \\
 &= & (R'_{l}-R_{l}) + (R'_{j}-R_{j}) \\
 & = & (R'_{l}+R'_{j}) - (R_{l}+R_{j}) \\
 & = & \left[  \log (1+P_{l}\lambda_{\ell(l)}^{2}) - \log (1+P_{l}\lambda_{k(j)}^{2}) \right. \\
 &  &  + \left. \log (1+P_{j}\lambda_{\ell(j)}^{2}) - \log (1+P_{j}\lambda_{k(l)}^{2}) \right]  \\
 & - & \left[  \log (1+P_{l}\lambda_{\ell(l)}^{2}) -\log (1+P_{l}\lambda_{k(l)}^{2}) \right. \\
 &  & +\left. \log (1+P_{j}\lambda_{\ell(j)}^{2}) - \log (1+P_{j}\lambda_{k(j)}^{2}) \right] \\
 & > & 0 \quad \mathrm{by}  \; \eqref{eq:1}.
\end{eqnarray*} 
Thus the new pairing strictly improves the rate. \\
\noindent \textbf{Case 2: } $P_{l} < P_{j}$ \\
By Lemma \ref{lemma1}, 
\begin{align}
& \log (1+P_{j}\lambda_{\ell(l)}^{2}) + \log (1+P_{l}\lambda_{\ell(j)}^{2}) \notag \\
& \phantom{xxxxxx} >  \log (1+P_{l}\lambda_{\ell(l)}^{2}) + \log (1+P_{j}\lambda_{\ell(j)}^{2}) \label{eq:2}
\end{align}
Consider a different pairing $\ell', \;k$ such that
\begin{equation*}
  \ell'(t)=\left\{
\begin{array}{cl}
 \ell(t) &  ; \; t \neq l, j \\
              \ell(l) & ; \;  t=j \\
              \ell(j) & ;\; t=l 
\end{array} \right. 
\end{equation*}
i.e. $\ell(l),\; \ell(j)$ are interchanged. 
Then the new rate $R'$ is such that
\begin{eqnarray*}
 R'-R & = &  (R'_{l}+R'_{j}) - (R_{l}+R_{j}) \\
 & = & \left[  \log (1+P_{j}\lambda_{\ell(l)}^{2}) + \log (1+P_{l}\lambda_{\ell(j)}^{2}) \right. \\
 &  & -  \left. \log (1+P_{l}\lambda_{\ell(l)}^{2}) - \log (1+P_{j}\lambda_{\ell(l)}^{2}) \right]  \\
 & > & 0 \quad \mathrm{by}  \; \eqref{eq:2}.
\end{eqnarray*} 
So the new pairing strictly improves the rate. This completes the proof of the lemma. 
\end{proof}

Now let us assume, without loss of generality, that the pairs are indexed such
that
\begin{equation}
\lambda_{\ell(l)}\geq \lambda_{\ell(l+1)} \quad \forall \; l=1,2,\cdots ,N \label{eq:3}
\end{equation}
and 
\begin{align}
& \lambda_{k(l)}\leq \lambda_{k(l+1)} \quad \mathrm{whenever} \quad \lambda_{\ell(l)}=\lambda_{\ell(l+1)}\label{eq:4} 
\end{align}
for  $l=1,2,\cdots ,N.$ 

\noindent
{\it Proof of Theorem~\ref{thm1}:} Let us define
\begin{equation}
 \sigma(l)=\ell(l) \quad \mathrm{for} \;  l=1,\cdots,N. \label{eq:5} \notag
 \end{equation} 
and
 \begin{equation}
 \sigma(l)=k(2N-l+1) \quad \mathrm{for} \;  l=N+1,\cdots,2N. \label{eq:6} \notag
 \end{equation} 
 We now need to prove that $\lambda_{\sigma(l)}\geq \lambda_{\sigma(l+1)} \quad \forall \; l$.\\
 For $l=1,2,\cdots,N-1$, this follows from \eqref{eq:3}. 
 For $l=N$, this follows from Lemma \ref{lemma3}. 
 For $N< l< 2N$, if $\lambda_{\sigma(l)}< \lambda_{\sigma(l+1)}$, then 
 \begin{equation*}
 \lambda_{k(j-1)} > \lambda_{k(j)} \quad \mathrm{where} \; j=2N-l+1 >1
 \end{equation*}
 But then 
 \begin{equation*}
\lambda_{\ell(j-1)} \geq \lambda_{\ell(j)}\geq \lambda_{k(j-1)} > \lambda_{k(j)} 
\end{equation*}
 This contradicts either \eqref{eq:4} or Lemma \ref{lemma4}. 
 Thus it must be true for $N< l< 2N$ that 
 \[ \lambda_{\sigma(l)} \geq \lambda_{\sigma(l+1)} \]
This completes the proof of the Theorem.

\end{document}